%***********************************************************************
\documentclass[11pt]{article}

\usepackage[latin1]{inputenc}
\usepackage{graphicx}
\usepackage{color}
\usepackage{epsfig}
\usepackage{comment}
\usepackage{multirow}
\usepackage{amsmath}
\usepackage{amssymb}
\usepackage[hidelinks, bookmarksopen=true]{hyperref}
\usepackage{natbib}
\usepackage{booktabs}
\usepackage{longtable}
\usepackage{subfig}
\usepackage{enumitem}
\usepackage{footnote}
\usepackage{empheq}
\usepackage{amsthm}
\newtheorem{theorem}{Theorem}

%%%%%%%%%%%%%%%%%%%%%%%%%%%%%%%%%%%%%%%%%%%%%%% Puesto en el scritp de Patrick %%%%%%%%%%%%%%%%%%%%%%%%%%%%%%%%%%%

\usepackage{booktabs}
\usepackage{bbm}
\usepackage{ifthen}

\usepackage{rotating}
\usepackage{float}
\usepackage{setspace}
\usepackage{fancyvrb}
\usepackage{dcolumn}
%\usepackage[margin=1in]{geometry}
%\usepackage[round]{natbib}

%%%%%%%%%%%%%%%%%%%%%%%%%%%%%%%%%%%%%%%%%%%%%%% Puesto en el scritp de Patrick %%%%%%%%%%%%%%%%%%%%%%%%%%%%%%%%%%%

%\parindent12pt

\oddsidemargin 0cm\evensidemargin 0cm \textwidth 16cm \topmargin
-1cm \textheight 22cm

\newcommand*{\N}{\mathcal{N}}
\newcommand{\Var}{\mbox{Var}}
\newcommand{\mat}[1]{\mathrm{\mathbf{#1}}}
\newcommand*\col[1]{\mathcal{#1}}
\newcommand*{\RR}{\mathbb{R}}
\newcommand*{\kernel}{\mathcal{K}}
\renewcommand*{\vec}[1]{\boldsymbol{#1}}

\newcommand{\aalpha}{\mbox{\boldmath{$\alpha$}}}
\newcommand{\bbeta}{\mbox{\boldmath{$\beta$}}}
\newcommand{\ggamma}{\mbox{\boldmath{$\gamma$}}}
\newcommand{\ddelta}{\mbox{\boldmath{$\delta$}}}
\newcommand{\mmu}{\mbox{\boldmath{$\mu$}}}
\newcommand{\xxi}{\mbox{\boldmath{$\xi$}}}
\newcommand{\eepsi}{\mbox{\boldmath{$\epsilon$}}}
\newcommand{\SSigma}{\mbox{\boldmath{$\Sigma$}}}
\newcommand{\0}{\bf 0}
\newcommand{\1}{\bf 1}
\newcommand{\x}{\bf x}
\newcommand{\rr}{\bf r}
\newcommand{\B}{\bf B}
\newcommand{\I}{\bf I}
\newcommand{\K}{\bf K}
\newcommand{\LL}{\bf L}
\newcommand{\PP}{\bf P}
\newcommand{\Q}{\bf Q}
\newcommand{\U}{\bf U}
\newcommand{\V}{\bf V}
\newcommand{\X}{\bf X}
\newcommand{\Y}{\bf Y}
\newcommand{\W}{\bf W}
\newcommand{\Z}{\bf Z}

\renewenvironment{proof}[1][Proof.]{\begin{trivlist}
\item[\hskip \labelsep {\textit{\bfseries #1}}]}{\end{trivlist}}

\begin{document}

%***********************************************************************

\title{\textbf{Alleviating confounding in spatio-temporal areal models with an application on crimes against women in India}}

\author{Adin, A$^{1,2}$, Goicoa, T$^{1,2}$, Hodges, J.S.$^{3}$, Schnell, P.$^{4}$, and Ugarte, M.D.$^{1,2}$\\
\small { \textit{$^1$ Department of Statistics, Computer Sciences and Mathematics, Public University of Navarre, Spain.}} \\
\small {\textit{$^2$ Institute for Advanced Materials and Mathematics, InaMat$^2$, Public University of Navarre, Spain.}}\\
\small {\textit{$^3$ Division of Biostatistics, School of Public Health, University of Minnesota, Minneapolis, USA}}\\
\small {\textit{$^4$ Division of Biostatistics, College of Public Health, The Ohio State University, Columbus, USA}}\\
\small { $*$Correspondence to María Dolores Ugarte, Departamento de Estad\'istica, Inform\'atica y Matem\'aticas, } \\
\small { Universidad P\'ublica de Navarra, Campus de Arrosadia, 31006 Pamplona, Spain.} \\
\small {\textbf{E-mail}: lola@unavarra.es }}
\date{}

\makeatletter
\pdfbookmark[0]{\@title}{title}
\makeatother

\maketitle

\begin{abstract}
Assessing associations between a response of interest and a set of covariates in spatial areal models is the leitmotiv of ecological regression. However, the presence of spatially correlated random effects can mask or even bias estimates of such associations due to confounding effects if they are not carefully handled. Though potentially harmful, confounding issues have often been ignored in practice leading to wrong conclusions about the underlying associations between the response and the covariates.
In spatio-temporal areal models, the temporal dimension may emerge as a new source of confounding, and the problem may be even worse. In this work, we propose two approaches to deal with confounding of fixed effects by spatial and temporal random effects,  while obtaining good model predictions. In particular, restricted regression and an apparently -- though in fact not -- equivalent procedure using constraints are proposed within both fully Bayes and empirical Bayes approaches. The methods are compared in terms of fixed-effect estimates and model selection criteria. The techniques are used to assess the association between dowry deaths and certain socio-demographic covariates in the districts of Uttar Pradesh, India.

\end{abstract}

%Keywords: Bias; CAR prior; Dowry deaths; INLA; PQL; Space-time interactions;  Variance inflation
Keywords: INLA; PQL; Orthogonal constraints, Restricted regression
\bigskip

\section{Introduction}

Spatial and spatio-temporal disease mapping techniques have been widely used in epidemiology and public health. Though these analyses are somewhat descriptive, they have undoubted value as they provide information about the geographical pattern of the disease, how this pattern evolves in time, and where regions with extreme risk (high or low) are located. An overall spatio-temporal view of the phenomenon under study is extremely useful for generating hypotheses about factors that may be associated with the disease. Throughout this paper and to avoid any misleading conclusion, a risk factor is simply a predictor of one outcome (i.e. cancer, or crimes against women in this paper), which can be useful for identifying those areas with high risk requiring special actions or intervention programmes.

When covariates related to the study question are unknown, spatio-temporal areal models are a first step in developing an understanding of the disease or crime under study. Incorporating potential risk factors in a model is usually known as ecological regression, and it confers an inferential perspective on spatio-temporal areal models as it quantifies the relationship between a response and covariates (see, e.g., \citealp[chapter~5]{martinez2019disease}).

Spatio-temporal areal models are practical and valuable tools but they are not free from inconveniences. \cite{goicoa2018spatio} highlight some identifiability problems involving the intercept and the spatial and temporal main effects, and involving the main effects and the interaction. To overcome these problems, these authors propose to reparameterize the models or to use constraints, though other approaches exist \citep[see, e.g.][for identifiability issues in a class of generalized additive models]{Fahrmeir0,kneib2019test}. Another key issue in spatial and spatio-temporal areal models is  potential confounding between the fixed effects and random effects. Spatial confounding occurs when the covariates have a spatial pattern and are collinear with the spatial random effects. Although some authors warn about its effects (see, e.g., \citealp{clayton1993spatial,zadnik2006analysis}), it has been and still is often ignored in practice. Though we restrict our attention to models for areal data, the effects of spatial confounding have been studied in other areas such as causal inference (see for example \citealp{papadogeorgou2019adjusting}), or interpolation/prediction \citep{page2017estimation}. \cite{reich2006effects} show that adding a conditional autoregressive spatial random effect (CAR) to a fixed effects model can lead to a great change in the posterior mean or a great increase in the posterior variance of the fixed effects, compared to the non-spatial regression model. %The variance inflation is crucial as the fixed effect estimates may thus be extremely conservative.
More precisely, these authors reconsider the Slovenia stomach cancer data, and observe that when the CAR random effect is included in the model, the relationship between stomach cancer and the covariate is diluted. To overcome this problem, they proposed to specify the random effects as orthogonal to the fixed effects. Later, \cite{hodges2010adding} explain why such spatial confounding occurs, and show that adding spatially correlated random effects does not adjust fixed effects for spatially-structured missing covariates, as has been generally understood, though they do smooth fitted values.

Following the approach of reparameterizing random effects, \cite{schnell2019spectral} also examine the mechanisms of confounding of fixed effects by random effects but they focus on diagnostics to evaluate the effects of confounding rather than on methods to overcome it. Additional work includes \cite{hughes2013dimension},  \cite{hanks2015restricted}, and \cite{prates2019alleviating}.  Recently, \cite{khan2020restricted} observe that restricted regression offers inference similar to models without random effects. Khan and Calder, however, did not discuss model fit or prediction.

The present paper deals with spatio-temporal models with covariates and spatial, temporal, and spatio-temporal random effects, pursuing two main goals, namely, correct estimation of linear association between socio-demographic covariates and dowry death (a crime against women very specific to India), and good model predictions. Because we have data and covariates in space and time, confounding may be spatial, temporal or both, and it may mask the association of the outcome with the covariates.

This paper proposes different methods to deal with confounding but also with identifiability, as identifying spatial, temporal and spatio-temporal random effects is important for interpretation. On one hand we consider restricted spatial regression \citep{reich2006effects} and study its application to a spatio-temporal setting. On the other hand, we examine use of constraints to deal with both identifiability and confounding issues. While both methods seem to solve the confounding issue, they lead to notably different results in terms of model fit. Here, we try to disentangle why this happens.

The rest of the paper is organized as follows. Section 2 poses the spatial and the spatio-temporal models and briefly revisits identifiability and confounding issues. Section 3 proposes two procedures to alleviate confounding and identify the model: a model reparameterization followed by restricted regression, and the use of orthogonality constraints. In Section 4, both techniques are used to assess the association between dowry death and socio-demographic covariates,  and to predict spatio-temporal patterns of risk in Uttar-Pradesh, the most populated state in India.

\section{Pitfalls in spatial and spatio-temporal areal models}

Spatial and spatio-temporal models for areal data have been and still are valuable tools to give a complete picture of the status of a disease, crime or other variable of interest measured using areal counts. Although the benefit and soundness of these models are beyond any doubt, they are not free from inconveniences that should be conveniently addressed. In this paper we revisit spatial and spatio-temporal models with intrinsic conditional autoregressive (ICAR) priors for space and random walks priors for time, and focus on two issues: model identifiability and confounding of fixed effects by random effects. The first usually arises because the spatial and temporal random effects implicitly include an intercept, and the interaction term and the main effects overlap. The second arises from collinearity between fixed and random effects, which may lead to bias and variance inflation of the fixed effects and hence erroneous inference.

\subsection{Identifiability and confounding in spatial models}

This section focuses on a spatial model for areal count data that includes an intrinsic conditional autoregressive (ICAR) prior for space and highlights the identifiability and confounding issues.

Suppose the area under study (a country, a state) is divided into small areas (counties, districts) denoted by $i=1,\ldots S$, and that $O_i$ stands for the number of observed cases (death or incident cases, number of crimes) in the $i$th small area. Conditional on the relative risk $r_i$, a Poisson distribution with mean $\mu_i=e_ir_i$ is assumed for $O_i$, where $e_i$ is the number of expected cases computed using, for example, internal standardization. That is
\begin{eqnarray*}
%\label{spatial}
O_i|r_i \sim Pois(\mu_i=e_ir_i),\,\, \log\mu_i=\log e_i+\log r_i,
\end{eqnarray*}
where $\log(e_i)$ is an offset and $\log r_i$ is modeled as
\begin{equation}\label{model.1}
\log r_i=\beta_0+{\bf x}_i'{\bbeta }+\xi_i.
\end{equation}
Here $\beta_0$ is the intercept, ${\bf x}_i'=(x_{i1},\ldots,x_{ip})$ is a $p$-vector of standardized covariates in the $i$th area, ${\bbeta}=(\beta_1,\ldots,\beta_p)'$ is the $p$-vector of fixed effects coefficients, and $\xi_i$ is the spatial random effect with an ICAR prior \citep{besag1974spatial}. Then, the vector of spatial effects $\xxi=(\xi_1,\ldots,\xi_S)'$ follows the improper distribution with Gaussian kernel, $ p(\xxi) \propto \exp\left(-\frac{1}{2\sigma_\xi^2} \xxi^{'} \Q_\xi \xxi\right)$,
where ${\Q}_{\xi}$ is the $S\times S$ spatial neighbourhood matrix with $(i,j)$ element ${\Q}_{\xi (ij)}=-1$ if areas $i$ and $j$ are neighbours and 0 otherwise, and the $i^{th}$ diagonal element ${\Q}_{\xi (ii)}$ is the number of neighbours of the $i^{th}$ area.
In disease mapping studies, typically two regions are neighbours if they share a common border. Because ${\Q}_{\xi}$'s rows sum to zero, ${\Q}_{\xi}{\1}_S={\0}$ where ${\1}_S$ is a vector of ones of length $S$, so an intercept is implicit in the ICAR specification, leading to an identifiability problem with the model intercept.  We will assume the spatial map is connected, so ${\Q}_{\xi}$'s 0 eigenvalue has multiplicity 1;  more general cases are easily accommodated and omitted for simplicity. \cite{goicoa2018spatio} use this spectral decomposition of the precision matrix of the random effects to reveal the identifiability issue:
%%
%\begin{equation*}
${\Q}_{\xi} = {\U}_{\xi} {\SSigma}_{\xi} {\U}_{\xi}^{'} =
\left[{\U}_{{\xi}n} : {\U}_{{\xi}r} \right]
\left(\begin{array}{cc} 0 & {\0} \\ {\0} & \widetilde{{\SSigma}}_{\xi} \end{array} \right)
\left[\begin{array}{c} {\U}_{{\xi}n}^{'} \\ {\U}_{{\xi}r}^{'} \end{array} \right]$,
%\end{equation*}
%%
%%
where $\widetilde{\SSigma}_{\xi}$ is a diagonal matrix with the non-null eigenvalues of ${\Q}_{\xi}$ in the main diagonal, and ${\U}_{\xi}=[{\U}_{{\xi}n} : {\U}_{{\xi}r} ]$ is an orthogonal matrix with columns the eigenvectors of ${\Q}_{\xi}$. The matrix ${\U}_{\xi}$ is split into the matrix of eigenvectors having null eigenvalues, ${\U}_{\xi n}$, and the matrix of eigenvectors having non-null eigenvalues, ${\U}_{{\xi r}}$. Here the identifiability issue is clearly revealed as ${\U}_{\xi n}$ equals the vector of ones ${\1}_S$ divided by a normalizing constant. Consequently, the spatial Model \eqref{model.1} can be expressed in matrix form as
\begin{equation}\label{model.2}
\log{\rr} = {\bf 1}_S \beta_0 + {\X}{\bbeta }+{\xxi}= {\bf 1}_S \beta_0 + {\bf X}{\bbeta }+{\bf 1}_S \beta_{\xi} +{\U}_{{\xi r}}\aalpha_{\xi},
\end{equation}
where ${\rr}=(r_1,\ldots,r_S)'$, ${\bf X}=({\bf X}_1,\ldots,{\bf X}_p)$ is the $S\times p$ fixed effects design matrix (excluding the intercept) with ${\bf X}_j=(X_{1j},\ldots, X_{Sj})',\, j=1,\ldots, p$, $\beta_{\xi}={\bf 1}_S'{\xxi}$, and $\aalpha_{\xi}={\U}_{{\xi r}}'{\xxi}$, $\aalpha_{\xi}\sim \N\left({\0},\sigma^2_{\xi}\widetilde{{\SSigma}}_{\xi}^{-1}\right)$. Unlike ${\aalpha}_{\xi}$, which has a proper prior, ${\beta}_{\xi}$ has prior precision zero, leading to the identifiability issue:  two intercepts are present, the model's and the one implicit in the ICAR. Removing one redundant intercept from the linear predictor, Model \eqref{model.2} can be written as
\begin{equation*}
\log{\rr} =  {\bf 1}_S \beta_0 + {\bf X}{\bbeta }+{\U}_{{\xi r}}\aalpha_{\xi},
\end{equation*}
resolving the identifiability issue. Alternatively, a sum-to-zero constraint $\sum_{i=1}^S \xi_i=0$ can be considered.

This reparameterization does not preclude the other potential pitfall of spatial models: spatial confounding. Spatial confounding can be briefly defined as the impossibility of dissociating covariate effects from spatial random effects. As far as we know, \cite{reich2006effects} is the first paper describing how a CAR random effect can produce changes in the estimates of the fixed effects and inflate the variance compared to a non-spatial model.
Later, \cite{hodges2010adding} study the effect of spatial confounding more deeply and show that the usual belief that random effects adjust fixed-effects estimates for missing confounders cannot be sustained. They show that the variance inflation is large if the correlation is large between the covariate ${\X}_j$ and the eigenvector of the spatial matrix ${\Q}_{\xi}$ having the smallest non-null eigenvalue, that is, there is a collinearity problem.

To identify situations where confounding may be a serious issue, these authors hypothesize that the random effects will mask the association between the response and the covariate if the latter exhibits a trend in the long axis of the map. To overcome this confounding, they propose to retain in the model only the part of the random effects lying in the space orthogonal to the fixed effects; for a CAR model with normal response ${\Y}$, this is the model
\begin{equation*}
{\Y} =  {\bf 1}_S \beta_0 + {\bf X}{\bbeta }+{\LL}{\LL}'{\xxi},
\end{equation*}
or its reparameterized version
\begin{equation*}
{\Y} ={\bf 1}_S \beta_0 + {\bf X}{\bbeta }+{\LL}{\LL}'{\U}_{{\xi r}}\aalpha_{\xi},
\end{equation*}
where the columns of ${\LL}$ are eigenvectors having non-null eigenvalues (which in fact are all 1) of the projection matrix ${\I}_S-{\X}_*({\X}_*'{\X}_*)^{-1}{\X}_*'$ onto the orthogonal space of the fixed effects, ${\I}_S$ is the $S\times S$ identity matrix, and ${\X}_*=[{\bf \1}_S:{{\X}}]$. According to \cite{hodges2010adding}, this restricted spatial regression takes account of the spatial correlation without changing the estimates of the fixed effects. However, with non-normal responses, e.g., a Poisson model, this method requires adjustments. In particular, to deal with collinearity between the fixed and random effects we would use the linear predictor
\begin{equation*}
{\log\rr} =  {\bf 1}_S \beta_0 + {\bf X}{\bbeta }+{\hat{\W}}^{-1/2}{\LL}{\LL}'{\hat{\W}}^{1/2}{\xxi},
\end{equation*}
where ${\W}$ is a diagonal matrix of weights with diagonal elements $W_{ii} =\Var(O_i|\beta_0, {\bbeta},{\xxi})=\mu_i$. ${\W}$ is the weight matrix in the iteratively reweighted least squares algorithm, and to remove collinearity between the fixed and random effects we should delete the combinations of ${\hat{\W}}^{1/2}{\xxi}$ in the span of ${\hat{\W}}^{1/2}{\X}$ \citep{reich2006effects}. Accordingly, ${\LL}$ is now the matrix whose columns are the eigenvectors with non-zero eigenvalues of the orthogonal projection matrix ${\I}_s-{\hat{\W}}^{1/2}{\X}_*({\X}_*'{\hat{\W}}{\X}_*)^{-1}{\X}_*'{\hat{\W}}^{1/2}$ onto the orthogonal space of ${\hat{\W}}^{1/2}{\X}_*$. It is also possible to reparameterize the model to remove identifiability issues and specify the random effects as orthogonal to the fixed effects:
\begin{equation*}%\label{restrict.spatial}
{\log\rr} =  {\bf 1}_S \beta_0 + {\bf X}{\bbeta }+{\hat{\W}}^{-1/2}{\LL}{\LL}'{\hat{\W}}^{1/2}{\U}_{{\xi r}}\aalpha_{\xi},
\end{equation*}
Note that in practice, $\hat{\W}$ is obtained by fitting the spatial Model \eqref{model.1}.

\subsection{Identifiability and confounding in spatio-temporal models}

Now suppose that for each small area or district $i$, we have data for time periods denoted by $t=1,\ldots, T$. Similar to the spatial case, and conditional on the spatio-temporal relative risk $r_{it}$, assume the number of observed cases $O_{it}$ in area $i$ and time $t$ follows a Poisson distribution with mean $\mu_{it}=e_{it}r_{it}$ where $e_{it}$ is the number of expected cases, that is
\begin{eqnarray*}
O_{it}|r_{it} \sim Pois(\mu_{it}=e_{it}r_{it}),\,\, \log\mu_{it}=\log e_{it}+\log r_{it},
\end{eqnarray*}
where the log relative risk is now modelled as
\begin{equation}\label{model.3}
\log r_{it}=\beta_0+{\x}_{it}'{\bbeta}+\xi_i+\gamma_t+\delta_{it}.
\end{equation}
Here, ${\x}_{it}'=(x_{it1},\ldots,x_{itp})$ is a $p$-vector of standardized spatio-temporal covariates in area $i$ and time $t$, ${\bbeta}=(\beta_1,\ldots,\beta_p)'$ is the $p$-vector of fixed effect coefficients,  $\gamma_t$ is the temporal main effect, and $\delta_{it}$ is the spatio-temporal interaction term. In matrix form, Model \eqref{model.3} is
\begin{equation}\label{model.4}
\log{\rr} = {\bf 1}_{TS} \beta_0 +{\X}{\bbeta}+ ({\1}_{T}\otimes {\I}_{S}){\xxi}+ ({\I}_{T}\otimes {\1}_{S}){\ggamma}+{\I}_{TS}{\ddelta},\\
\end{equation}
where ${\rr}=(r_{11},\ldots, r_{S1},\ldots,r_{1T},\ldots,r_{ST})'$, ${\bf 1}_{TS}$ and ${\bf 1}_{T}$ are columns of ones of length $TS$ and $T$ respectively, ${\X}=({\X}_1,\ldots,{\X}_p)$ is the $TS\times p$ matrix of standardized spatio-temporal covariates with \sloppy ${\X}_j=(X_{11j},\ldots, X_{S1j},\ldots,X_{1Tj},\ldots,X_{STj})'$, $j=1,\ldots, p$, and ${\I}_{T}$ and ${\I}_{TS}$ are $T\times T$ and $TS\times TS$ identity matrices respectively.  As before, we consider an ICAR prior for the spatial random effect. For the vector of temporal random effects $\ggamma=(\gamma_1,\ldots,\gamma_T)'$, we use a first-order random walk (RW1), that is
$ p(\ggamma) \propto \exp\left(-\frac{1}{2\sigma_\gamma^2} \ggamma^{'} \Q_\gamma \ggamma\right)$,
where ${\Q}_{\gamma}$ is the RW1's structure matrix \citep[p.~95]{rue2005gaussian}. The vector of interaction random effects ${\ddelta}=(\delta_{11},\ldots,\delta_{S1},\ldots, \delta_{1T},\ldots,\delta_{ST})'$ is assumed to follow a distribution with the Gaussian kernel
$p(\ddelta) \propto \exp\left(-\frac{1}{2\sigma_\delta^2} \ddelta^{'} \Q_\delta \ddelta \right)$, and ${\Q_{\delta}}=({\Q}_{\gamma}\otimes {\Q}_{\xi})$.
This interaction term corresponds to the Type IV interaction of \cite{held2000bayesian}. We consider this type of interaction as it includes spatial and temporal dependence and consequently can produce spatial or temporal confounding. Now consider the spectral decomposition of ${\Q_{\gamma}}$:
${\Q}_{\gamma} = {\U}_{\gamma} {\SSigma}_{\gamma} {\U}_{\gamma}^{'} =
\left[{\U}_{{\gamma}n} : {\U}_{{\gamma}r} \right]
\left(\begin{array}{cc} {0} & {\0} \\
                        {\0} & \widetilde{{\SSigma}}_{\gamma}
                        \end{array} \right)
\left[\begin{array}{c} {\U}_{{\gamma}n}^{'} \\ {\U}_{{\gamma}r}^{'} \end{array} \right]$,
where ${\U}_{\gamma}=\left[{\U}_{{\gamma}n}:{\U}_{{\gamma}r} \right]$ is the $T\times T$ matrix of eigenvectors, ${\U}_{{\gamma}n}={\1}_T $  (up to a constant) is the eigenvector with null eigenvalue, ${\U}_{{\gamma}r}$ is the $T\times (T-1)$ matrix of eigenvectors with non-null eigenvalues, and $\widetilde{{\SSigma}}_{\gamma}$ is a diagonal matrix with the non-null eigenvalues in the main diagonal.
The spectral decomposition of ${\Q}_{\delta}={\Q}_{\gamma}\otimes{\Q}_{\xi}$ can be expressed as
${\Q}_{\delta}={\Q}_{\gamma}\otimes{\Q}_{\xi} = {\U}_{\delta} {\SSigma}_{\delta} {\U}_{\delta}^{'} =
\left[{\U}_{{\delta}n} : {\U}_{{\delta}r} \right]
\left(\begin{array}{cc} {0} & {\0} \\
{\0} & \widetilde{{\SSigma}}_{\delta}
\end{array} \right)
\left[\begin{array}{c} {\U}_{{\delta}n}^{'} \\ {\U}_{{\delta}r}^{'} \end{array} \right]$,
where, as in the previous cases, ${\U}_{{\delta}n}$ is the matrix of eigenvectors having null eigenvalues, ${\U}_{{\delta}r}$  is the matrix of eigenvectors having non-null eigenvalues, and $\widetilde{{\SSigma}}_{\delta}=\widetilde{{\SSigma}}_{\gamma}\otimes\widetilde{{\SSigma}}_{\xi}$ is a diagonal matrix with the non-null eigenvalues in the main diagonal. It can be shown easily that
${\U}_{{\delta}n}=[{\U}_{\gamma n} \otimes {\U}_{\xi n} : {\U}_{\gamma n} \otimes {\U}_{\xi r} : {\U}_{\gamma r} \otimes {\U}_{\xi n}],\quad {\U}_{{\delta}r} =[{\U}_{{\gamma}r}\otimes {\U}_{{\xi}r}]$.\\

\noindent
Similar to the spatial case, the spatio-temporal Model \eqref{model.4} can be expressed in matrix form as
\begin{eqnarray}\label{model.5}
\log{\rr} &=&
{\bf 1}_{TS} \beta_0 +{\X}{\bbeta}+{\1}_{TS} {\beta}_{\xi}+ ({\1}_T \otimes {\U}_{\xi r} ){\aalpha}_{\xi}+ {\1}_{TS} {\beta}_{\gamma} + ({\U}_{\gamma r} \otimes {\1}_S) {\aalpha}_{\gamma}\nonumber\\
 &+&[{\1}_{TS} : {\1}_T \otimes {\U}_{\xi r} : {\U}_{\gamma r} \otimes {\1}_S] {\bbeta}_{\delta} + ({\U}_{\gamma r} \otimes {\U}_{\xi r}) {\aalpha}_{\delta},
\end{eqnarray}
where  ${\beta}_{\gamma}={\U}_{{\gamma}n}^{'}{\ggamma}={\1}_T^{'}{\ggamma}$,  ${\aalpha}_{\gamma}={\U}_{{\gamma}r}^{'}{\ggamma}\sim\N\left({\0},\sigma^2_{\gamma}\widetilde{{\SSigma}}_{\gamma}^{-1}\right)$, ${\bbeta}_{\delta}={\U}_{{\delta}n}^{'}{\ddelta}$, and ${\aalpha}_{\delta}={\U}_{{\delta}r}^{'}{\ddelta}\sim\N\left({\0},\sigma^2_{\delta}\widetilde{{\SSigma}}_{\delta}^{-1}\right)$. The reparameterized form of Model \eqref{model.5} sheds light on the identifiability issues in spatio-temporal models as it lays bare repeated terms. Removing those superfluous terms gives the following model:
\begin{equation}\label{model.6}
\log{\rr} = {\bf 1}_{TS} \beta_0 +{\X}{\bbeta}+ ({\1}_T \otimes {\U}_{\xi r} ){\aalpha}_{\xi}+ ({\U}_{\gamma r} \otimes {\1}_S) {\aalpha}_{\gamma}+ ({\U}_{\gamma r} \otimes {\U}_{\xi r}) {\aalpha}_{\delta},
\end{equation}
which overcomes identifiability issues. For more details about this reparameterization, a generalization to RW2 priors for time, and the other interaction types described by \cite{held2000bayesian}, the reader is referred to \cite{goicoa2018spatio}, where sum-to-zero constraints are alternatively derived to achieve model identifiability.

Confounding issues are more challenging in spatio-temporal settings than in spatial settings. In a spatio-temporal model, the covariates can exhibit spatial patterns each year, or temporal patterns in each area. As the model includes both spatial and temporal random effects along with interaction terms, the source of confounding can be spatial, temporal, or both. Note that the reparameterized Model \eqref{model.6} may present confounding problems as the covariates may be collinear with the design matrix of the spatial, temporal, or spatio-temporal random effects. The following section proposes two methods to alleviate confounding in spatio-temporal models, restricted spatial regression and constraints that make the estimated random effects orthogonal to the fixed effects.

\section{Alleviating confounding in spatio-temporal models}
Constraints can be used to make the random effects orthogonal to the fixed effects and thus alleviate confounding. The idea of inducing orthogonality between the fixed and random effects is similar to restricted regression, but they have some differences that can lead to notably distinct results. This section shows that constraining the random effects to be orthogonal to the fixed effects is not equivalent to removing from the linear predictor the component of the random effects in the span of the fixed effects, as might reasonably be assumed. In some cases, these differences are hardly noticeable in the spatial case but they become important in spatio-temporal settings.

\subsection{Model reparameterization and restricted regression}

Consider Model \eqref{model.6}. This reparameterized spatio-temporal model is convenient as it solves the identifiability problems by removing repeated terms in the spatial and temporal main effects and the interaction random effects. The confounding issues are more challenging now because covariates are spatio-temporal, so we can have collinearity between the covariates and the spatial term, the covariates and the temporal term, or the covariates and the spatio-temporal interaction. Another concern is how to assess these correlations as the covariates have $T\times S$ entries while the spatial and temporal random effects have $S$ and $T$ elements. This section considers the matrix ${\LL}$ with columns that are the eigenvectors having non-null eigenvalues (all equal 1) of the projection matrix
${\PP}^c={\I}_{TS}-{\hat{\W}}^{1/2}{\X}_*({\X}_*'{\hat{\W}}{\X}_*)^{-1}{\X}_*'{\hat{\W}}^{1/2}$, where now ${\X}_*=[{\1}_{TS}:{\X}]$, and ${\X}$ is the design matrix of covariates. The matrix ${\PP}^c$ projects onto the space orthogonal to the (scaled) fixed effects $\hat{\W}^{1/2}{\X}_*$. Consider also the matrix ${\K}$, with columns that are the eigenvectors having eigenvalue 1 of the projection matrix ${\PP}={\hat{\W}}^{1/2}{\X}_*({\X}_*'{\hat{\W}}{\X}_*)^{-1}{\X}_*'{\hat{\W}}^{1/2}$. Note that ${\K}={\hat{\W}}^{1/2}{\X}_*$ and that $({\K}{\K}^{'}+{\LL}{\LL}^{'})={\I}_{TS}$;  we therefore propose the following spatio-temporal model:
\begin{eqnarray*}
 \log{\rr} &= &{\1}_{TS}\beta_0 + {\X}{\bbeta}+ {\hat\W}^{-1/2}({\K}{\K}^{'}+{\LL}{\LL}^{'}){\hat\W}^{1/2}({\1}_T \otimes {\U}_{\xi_r}){\aalpha}_{\xi} \nonumber\\
  && + {\hat\W}^{-1/2}({\K}{\K}^{'}+{\LL}{\LL}^{'}){\hat\W}^{1/2}({\U}_{\gamma_r} \otimes {\1}_S){\aalpha}_{\gamma}\nonumber\\
  &&+ {\hat\W}^{-1/2}({\K}{\K}^{'}+{\LL}{\LL}^{'}){\hat\W}^{1/2}({\U}_{\gamma_r} \otimes {\U}_{\xi_r}){\aalpha}_{\delta} .
\end{eqnarray*}

\noindent
As the terms involving the matrix ${\K}$ are in the span of the fixed effects ${\X}_*$, they are removed and the model becomes
\begin{eqnarray}\label{model.7}
 \log{\rr} &= &{\1}_{TS}\beta_0 + {\X}{\bbeta} +{\hat\W}^{-1/2}{\LL}{\LL}^{'}{\hat\W}^{1/2}({\1}_T \otimes {\U}_{\xi_r}){\aalpha}_{\xi} \nonumber\\
  &&+ {\hat\W}^{-1/2}{\LL}{\LL}^{'}{\hat\W}^{1/2}({\U}_{\gamma_r} \otimes {\1}_S){\aalpha}_{\gamma}\nonumber\\
  && + {\hat\W}^{-1/2}{\LL}{\LL}^{'}{\hat\W}^{1/2}({\U}_{\gamma_r} \otimes {\U}_{\xi_r}){\aalpha}_{\delta}.
\end{eqnarray}

\noindent
Model \eqref{model.7} deserves comment. First, by using the matrix ${\LL}$, {\it all} random effects have been restricted to be orthogonal to the fixed effects (${\LL}$ is orthogonal to ${\hat\W}^{1/2}{\X}_*$ by construction, that is ${\X}_*^{'}{\hat\W}^{1/2}{\LL}={\bf 0}$), but we could also restrict only some of the random effects. Restricting all the random effects may not be necessary if only some of the random effects confound a fixed effect. Thus it is possible to orthogonalize only the spatial, or temporal, or interaction random effects.  The other important issue is the final estimation of the spatial and temporal random effects. Compared to Model \eqref{model.6}, the spatial and temporal random effects in Model \eqref{model.7} undergo a substantial change. In Model \eqref{model.6}, the spatial and temporal main effects are $({\1}_T \otimes {\U}_{\xi r} ){\aalpha}_{\xi}$ and $({\U}_{\gamma r} \otimes {\1}_S) {\aalpha}_{\gamma}$ respectively;  clearly $S$ spatial effects are repeated in time, while $T$ temporal effects are repeated for the $S$ small areas. In the orthogonalized Model \eqref{model.7}, however, the spatial and temporal main effects are ${\hat\W}^{-1/2}{\LL}{\LL}^{'}{\hat\W}^{1/2}({\1}_T \otimes {\U}_{\xi_r}){\aalpha}_{\xi}$ and ${\hat\W}^{-1/2}{\LL}{\LL}^{'}{\hat\W}^{1/2}({\U}_{\gamma_r} \otimes {\1}_S){\aalpha}_{\gamma}$ respectively. Because these terms include the matrix ${\LL}$, which has $T\times S$ rows (the covariates are spatio-temporal, in general), the spatial effect associated with the $i^{th}$ area is different in each time period as it depends on the value of the covariates in that period. Similarly, the effect of time period $t$ is also different for each area as it depends on the covariates in that area. Consequently, Model \eqref{model.7} has time-varying spatial effects and space-varying temporal effects.

\subsection{Constraints to alleviate confounding}
\label{constraints.alleviate}

Placing constraints is a way to achieve identifiability in spatio-temporal disease mapping models \citep{goicoa2018spatio}. Constraints can be used not just to identify models but also to alleviate spatial confounding, by constraining the random effects to be orthogonal to the fixed effects. This section considers such constraints. Specifically, consider Model \eqref{model.4} and consider the linear predictor $\hat{\W}^{1/2}{\1}_{TS}\beta_0+\hat{\W}^{1/2}{\X}{\bbeta}+{\hat\W}^{1/2}({\1}_{T}\otimes {\I}_{S}){\xxi}+{\hat\W}^{1/2}({\I}_{T}\otimes {\1}_{S}){\ggamma}+{\hat\W}^{1/2}{\I}_{TS}{\ddelta}$. To make the random effects orthogonal to the fixed effects, these constraints are required:
\begin{eqnarray}\label{ortho.cons}
\left[{\1}_{TS}:{\X}\right]^{'}{\hat\W}({\1}_T \otimes {\I}_S){\xxi}={\0} & \Longleftrightarrow & \sum_{i=1}^S \xi_i \hat{w}_{i\cdot} =0, \quad  \sum_{i=1}^S \xi_i(\hat{w}x_j)_{i\cdot}=0, \,\forall j=1,\ldots, p,\nonumber\\
\left[{\1}_{TS}:{\X}\right]^{'}{\hat\W}({\I}_T \otimes {\1}_S){\ggamma}={\0}     & \Longleftrightarrow & \sum_{t=1}^T \gamma_t \hat{w}_{\cdot t} =0 ,\quad  \sum\limits_{t=1}^T\gamma_t(\hat{w}x_j)_{\cdot t}=0, \,\forall j=1,\ldots, p,\nonumber\\
\left[{\1}_{TS}:{\X}\right]^{'}{\hat\W}{\ddelta}={\0} & \Longleftrightarrow &  \sum_{i=1}^S \sum_{t=1}^T\hat{w}_{it}\delta_{it}=0,\quad \sum_{i=1}^S \sum\limits_{t=1}^T\hat{w}_{it}x_{it}\delta_{it}=0, \,\forall j=1,\ldots, p,\nonumber\\
\end{eqnarray}
where $\hat{w}_{i\cdot}=\sum_{t=1}^T\hat{w}_{it}$, $\hat{w}_{\cdot t}=\sum_{i=1}^S\hat{w}_{it}$, $(\hat{w}x_j)_{i\cdot}=\sum\limits_{t=1}^T\hat{w}_{it}x_{jit}$, and $(\hat{w}x_j)_{\cdot t}=\sum_{i=1}^S\hat{w}_{it}x_{jit}$. That is, the spatial random effect is constrained to be orthogonal to the time-averaged covariates at each location, the temporal random effect is constrained to be orthogonal to the space-averaged covariates at each time, and the interaction random effect is constrained to be orthogonal to the full fixed-effects design matrix. However, the interaction term ${\ddelta}$ is confounded with the spatial and temporal main (random) effects and these additional constraints are required:
\begin{equation}\label{ortho.adicional}
  [({\1}_T \otimes {\I}_S) : ({\I}_T \otimes {\1}_S)]^{'}{\hat\W}{\ddelta}={\0} \Longleftrightarrow \begin{array}{l}\sum\limits_{i=1}^S \hat{w}_{it}\delta_{it}=0 \quad \forall t=1,\ldots,T. \\[2.ex] \sum\limits_{t=1}^T \hat{w}_{it}\delta_{it}=0 \quad \forall i=1,\ldots,S. \end{array},
\end{equation}
making the constraint $\sum_{i=1}^S \sum_{t=1}^T\hat{w}_{it}\delta_{it}=0$ redundant. Avoiding confounding between the interaction and the spatial and temporal main effect terms is crucial for model interpretation. The spatial main effects capture spatial variation that is not accounted for by the covariates so they may help identify spatial risk factors that have not been included in the model. Similarly, the temporal main effects may help identify risk factors associated with time. The interaction random effects usually explain a small portion of variability and capture deviations from the main effects. Given that the interaction term's null space equals the design matrix of the spatial and temporal main effects (see Model \eqref{model.5}), confounding between main effect and interaction random effects may make it impossible to achieve one of the goals of disease mapping: smoothing in space and time to uncover spatial and temporal patterns.

Though the restricted regression and the constraints approaches would seem to be equivalent, they are in fact different. The restricted regression approach focuses on the reduced Model \eqref{model.7}, where the part of the random effects in the span of the fixed effects has been removed, and orthogonality is achieved through the matrix ${\LL}$. The constraints approach, by contrast, starts with the full Model \eqref{model.4} and Equation \eqref{ortho.cons}'s constraints force the random effects to be orthogonal to the fixed effects, which is a way to remove collinearities between them. A key distinction between the restricted regression and constraints approaches is that in the former, the spatial and temporal effects change in time and space respectively due to the spatio-temporal nature of the matrix ${\LL}$, while in the constraints approach, the spatial effects remain constant in time and the temporal effects do not change in space.

One may think the methods are equivalent because removing the part of the random effects in the span of the fixed effects means ${\K}^{'}{\hat\W}^{1/2}({\1}_T \otimes {\I}_{S}){\xxi}={\0}$, ${\K}^{'}{\hat\W}^{1/2}({\I}_{T} \otimes {\1}_S){\ggamma}={\0}$, and ${\K}^{'}{\hat\W}^{1/2}{\I}_{TS}{\ddelta}={\0}$, which match the constraints in Equation \eqref{ortho.cons}, taking into account that ${\K}={\hat\W}^{1/2}[{\1}:{\X}]$. However, placing constraints is in fact equivalent to {\it oblique projections} of the random effects. That is, the constraints in Equations \eqref{ortho.cons} and \eqref{ortho.adicional} correspond to this model:
\begin{eqnarray*}\label{model.proj}
\log{\rr} &=& {\1}_{TS}\beta_0 + {\X}\beta + ({\1}_T \otimes {\I}_S){\PP}_{\xi}{\xxi} + ({\I}_T \otimes {\1}_S){\PP}_{\gamma}{\ggamma} + {\PP}_{\delta}{\ddelta}\nonumber\\
         &=& {\1}_{TS}\beta_0 + {\X}\beta + ({\1}_T \otimes {\PP}_{\xi}){\xxi} + ({\PP}_{\gamma} \otimes {\1}_S){\ggamma} + {\PP}_{\delta}{\ddelta}
\end{eqnarray*}
where
\begin{equation}\label{projections}
{\PP}_{\xi} = {\LL}_{\xi}[{\LL}_{\xi}^{'}{\Q}_{\xi}{\LL}_{\xi}]^{-1}{\LL}_{\xi}^{'}{\Q}_{\xi},\quad
{\PP}_{\gamma} = {\LL}_{\gamma}[{\LL}_{\gamma}^{'}{\Q}_{\gamma}{\LL}_{\gamma}]^{-1}{\LL}_{\gamma}^{'}{\Q}_{\gamma}, \quad
{\PP}_{\delta} = {\LL}_{\delta}[{\LL}_{\delta}^{'}{\Q}_{\delta}{\LL}_{\delta}]^{-1}{\LL}_{\delta}^{'}{\Q}_{\delta}
\end{equation}
are made to be, respectively, oblique projections onto the orthogonal complements of the row spaces of the matrices
\begin{equation}\label{matrixB}
{\B}_{\xi} = {\X}_*^{'}{\hat\W}({\1}_T \otimes {\I}_S),\quad {\B}_{\gamma} ={\X}_*^{'}{\hat\W}({\I}_T \otimes {\1}_S),\quad {\B}_{\delta} = [({\1}_T \otimes {\I}_S) : ({\I}_T \otimes {\1}_S):{\X}]^{'}{\hat\W}
\end{equation}
by setting ${\LL}_{\xi}$ to be the matrix whose columns are the eigenvectors with non-zero eigenvalues of the projection matrix ${\I}_{S} - {\B}_{\xi}^{'}({\B}_{\xi} {\B}_{\xi}^{'})^{-1}{\B}_{\xi}$, and similarly for ${\LL}_{\gamma}$ and ${\LL}_{\delta}$. See Supplementary Material A for the equivalence of the oblique projections and the constraints and detailed expressions of matrices ${\B}_{\xi}$, ${\B}_{\gamma}$, and ${\B}_{\delta}$.

\subsection{Model fitting and inference}

Two main methods have been used to fit spatial and spatio-temporal disease mapping models: a fully Bayesian approach and an empirical Bayes approach. The latter provides point estimates of quantities of interest, traditionally using penalized quasi-likelihood (PQL; \citealp{breslow1993approximate}). It has proven to be interesting because it is relatively simple and has few convergence problems, and it has been used to fit different models such as the ICAR or P-splines \citep{dean2001detecting,dean2004penalized,ugarte2010spatio, ugarte2012splines}. However, PQL automatically places sum-to-zero constraints due to the rank deficiency of the random effects covariance matrices, and placing additional constraints is not so straightforward (see the Appendix for full details).

The fully Bayesian approach is probably the most-used technique for model fitting and inference because it provides a full posterior distribution for quantities of interest. Though it has traditionally relied on Markov chain Monte Carlo (MCMC), the computational burden of MCMC prompted development of other attractive procedures, including INLA (integrated nested Laplace approximations;  \citealp{rue2009approximate}). INLA's main advantage is that it provides approximate Bayesian inference without using MCMC, leading to substantial reduction in computational cost. INLA is ready to use in the free software \texttt{R} using the package \texttt{R-INLA}, which has implemented general models that can be adapted to disease mapping. Also, imposing constraints in INLA is relatively simple.

\section{Data analysis: the association between dowry deaths and socio-demographic covariates in Uttar Pradesh, India}

In this section, the two approaches to alleviate confounding are used to assess the potential association between dowry deaths and some socio-demographic covariates in Uttar Pradesh, the most populated state in India.
Dowry deaths is a cruel form of violence against women deep-rooted in India. It is strongly related to dowry, which can be defined as the amount of money, property, or goods that the bride's family gives to the groom or his relatives for the marriage. Although dowry was originally designed to protect women from unfair traditions, such as the impossibility of women owning immovable property (see \citealp{banerjee2014dowry}), it has become a means by which the husband or husband's relatives extort higher dowries under the threat of physical violence against the wife. This form of violence can be extended over time, ending in what is known as a dowry death. If a woman commits suicide because she has experienced mental or physical violence related to the dowry, this is also considered a dowry death.

We have data on dowry deaths in 70 districts of Uttar Pradesh, the Indian state with the highest rate of dowry deaths, during the period 2001-2014. In 2014, the last year of the study period, 8,455 dowry deaths were registered representing 29.2\% of all dowry deaths in India. One of the difficulties of combatting this crime against women is the lack of knowledge about potential risk factors that might be associated with dowry deaths, and thus help to predict this crime. One hypothesized risk factor is the sex ratio, that is, the number of females per 1000 males. The literature has contradictory results about the sex ratio:  some authors find a negative association between dowry deaths and sex ratio \citep{mukherjee2001crimes}, while others find a positive association \citep{dang2018dowry}.  In a more in-depth study of dowry deaths in Uttar Pradesh, \cite{vicente2020crime} consider spatio-temporal models and include some potential risk factors as covariates to assess their association with dowry deaths. However, these authors do not address confounding issues.  We now focus on some of those covariates, namely sex ratio ($x_1)$, population density ($x_2)$, female literacy rate ($x_3)$, per capita income ($x_4)$, murder rate ($x_5)$, and burglary rate ($x_6)$, estimating their association with dowry deaths accounting for confounding. %As a note of caution, sex ratio, population density, and female literacy rate are population-based measures and are only available in the census years 2001 and 2011. For other years, they have been linearly interpolated.
The goal is to see the effect of confounding on the estimates and standard errors of the fixed effects. We also compare restricted regression and constraints in terms of model fit and complexity using DIC (Deviance Information Criterion, \citealp{spiegelhalter2002bayesian}) and AIC (Akaike Information Criterion, \citealp{akaike1974new}) in a Bayesian and frequentist approach respectively. %, WAIC (Watanabe 2010), or LS (Logarithmic Score, Gneiting and Raftery, 2007).

\subsection{Data analysis}
\label{sec:Data_Analysis}

For purely spatial models, \cite{reich2006effects} and \cite{hodges2010adding} argued that spatial confounding is created by a high correlation between a covariate and the eigenvector of the spatial precision matrix having the smallest non-null eigenvalue. In the spatio-temporal setting, this suggests examining analogous correlations for the spatial, temporal, and spatio-temporal random effects.  Therefore, before fitting any models, we examine the data and compute some correlations. Specifically, we split the covariates into spatial vectors for each year of the period, and computed Pearson's correlation between those spatial vectors and ${\U}_{\xi_{69}}$, the eigenvector of the spatial precision matrix with the smallest non-null eigenvalue, so for each covariate, we compute fourteen spatial correlations.  Similarly, we split the covariates into temporal vectors for each district, and evaluate the correlations between those temporal vectors and ${\U}_{\gamma_{13}}$, the eigenvector of the temporal precision matrix with the smallest non-null eigenvalue, seventy temporal correlations for each covariate. Boxplots of correlations between the covariates and the spatial eigenvector ${\U}_{\xi_{69}}$ for each year (left) and correlations between the covariates and the temporal eigenvector ${\U}_{\gamma_{13}}$ for each area (right) are displayed in Figure \ref{figB1} in Supplementary Material B, together with some explanation.\\

\noindent
Now we consider four spatio-temporal models.
\begin{itemize}
  \item Model ST1: Simple spatio-temporal Poisson model (Equation \eqref{model.4} without random effects)
  \item Model ST2: Spatio-temporal model without accounting for confounding (Equations \eqref{model.4} and \eqref{model.6} for INLA and PQL respectively).
  \item Model ST3: Spatio-temporal model with restricted regression (Equation \eqref{model.7}). %or equivalently Model \eqref{model.8})
  \item Model ST4: Spatio-temporal model with orthogonality constraints  (Equations \eqref{ortho.cons}, \eqref{ortho.adicional})
\end{itemize}

\noindent
Within the full Bayesian approach, we have used Normal prior distributions with mean 0 and variance equal to 1000 for fixed effects. Regarding random effects, we have used uniform prior distributions on the positive real line for the standard deviations. We have also considered logGamma(1,0.00005) priors for the log-precisions and PC priors for the standard deviations \citep[see e.g.][]{simpson2017}, but these gave similar results and hence they are not shown here to save space.

\begin{table}[!t]
\caption{Posterior mean, posterior standard deviation, and 95\% credible intervals of the fixed effects for models fitted with INLA (left), and point estimates, standard errors  and 95\% confidence intervals obtained with PQL (right). The results for Model ST2 are in bold when the covariate is not significantly associated with dowry death. \label{tab1}}
\begin{center}
\footnotesize
\resizebox{\textwidth}{!}{
\begin{tabular}{ll|rrrr|rrrr}
\\[-4ex]
& & \multicolumn{4}{c|}{INLA (simplified Laplace)} & \multicolumn{4}{c}{PQL (tol=1e-5)}\\[0.5ex]
\hline
&&\multicolumn{8}{c}{Sex ratio}\\[0.5ex]\hline
&Model & Mean & SD & $q_{0.025}$ & $q_{0.975}$ & Estimate & SE & $q_{0.025}$ & $q_{0.975}$ \\[0.5ex]
\hline
$\beta_1$
& ST1  & -0.2366 &0.0085 &   -0.2532 &   -0.2200 & -0.2366 &0.0085 &-0.2532 &-0.2200  \\
& \bf ST2  & \bf-0.0920 &\bf0.0470 &  \bf -0.1836 &   \bf 0.0015 &  -0.0924 & 0.0459 & -0.1825 & -0.0024 \\
& ST3  & -0.2303 &0.0085 &   -0.2470 &   -0.2135 & -0.2299 &0.0082 &-0.2459 &-0.2139 \\
& ST4  & -0.2284 &0.0082 &   -0.2445 &   -0.2125 & -0.2281 &0.0081 &-0.2441 &-0.2121 \\
\hline
&&\multicolumn{8}{c}{Population Density}\\[0.5ex]\hline
$\beta_2$
& ST1  & -0.0917 &0.0065 &   -0.1044 &   -0.0791 & -0.0917 &0.0065 &-0.1044 &-0.0792   \\
& \bf ST2  & \bf-0.0069 &\bf0.0318 &   \bf-0.0682 &    \bf0.0568 & \bf-0.0081 &\bf0.0304 &\bf-0.0677 & \bf0.0515  \\
& ST3  & -0.0901 &0.0066 &   -0.1031 &   -0.0773 & -0.0904 &0.0061 &-0.1024 &-0.0785  \\
& ST4  & -0.0946 &0.0063 &   -0.1071 &   -0.0824 & -0.0949 &0.0063 &-0.1073 &-0.0826  \\
\hline
&&\multicolumn{8}{c}{Female literacy rate}\\[0.5ex]\hline
$\beta_3$
& ST1  &  0.0992 &0.0076 &    0.0843 &    0.1141 &  0.0992 &0.0076 & 0.0843 &0.1141   \\
& \bf ST2  & \bf-0.0478 &\bf0.0501 &   \bf-0.1482 &    \bf0.0487 & \bf-0.0469 &\bf0.0474 &\bf-0.1398 &\bf0.0460  \\
& ST3  &  0.0946 &0.0080 &    0.0789 &    0.1104 &  0.0946 &0.0077 & 0.0796 &0.1096  \\
& ST4  &  0.0975 &0.0077 &    0.0823 &    0.1126 &  0.0975 &0.0077 & 0.0823 &0.1126  \\
\hline
&&\multicolumn{8}{c}{Per capita income}\\[0.5ex]\hline
$\beta_4$
& ST1  & -0.0661 &0.0084 &   -0.0827 &   -0.0498 & -0.0661 &0.0084 &-0.0827 &-0.0499  \\
&\bf ST2  &\bf -0.0196 &\bf0.0296 &   \bf-0.0776 &    \bf0.0385 & \bf-0.0198 &\bf0.0288 &\bf-0.0763 & \bf0.0368 \\
& ST3  & -0.0651 &0.0085 &   -0.0819 &   -0.0486 & -0.0650 &0.0081 &-0.0809 &-0.0491 \\
& ST4  & -0.0680 &0.0080 &   -0.0837 &   -0.0525 & -0.0678 &0.0079 &-0.0833 &-0.0522 \\
\hline
&&\multicolumn{8}{c}{Murder rate}\\[0.5ex]\hline
$\beta_5$
& ST1  & 0.0833 &0.0076 &    0.0682 &    0.0982 & 0.0833 &0.0076 &0.0682 &0.0982  \\
& ST2  & 0.0846 &0.0203 &    0.0446 &    0.1244 & 0.0845 &0.0202 &0.0449 &0.1241 \\
& ST3  & 0.0906 &0.0081 &    0.0748 &    0.1064 & 0.0907 &0.0077 &0.0756 &0.1057 \\
& ST4  & 0.0881 &0.0079 &    0.0726 &    0.1034 & 0.0881 &0.0079 &0.0727 &0.1035 \\
 \hline
 &&\multicolumn{8}{c}{Burglary rate}\\[0.5ex]\hline
$\beta_6$
& ST1  & 0.0419 &0.0062 &    0.0297 &    0.0541 & 0.0419 &0.0062 &0.0297 &0.0540 \\
& ST2  & 0.0535 &0.0161 &    0.0220 &    0.0850 & 0.0535 &0.0158 &0.0226 &0.0844 \\
& ST3  & 0.0424 &0.0068 &    0.0291 &    0.0557 & 0.0423 &0.0063 &0.0300 &0.0547 \\
& ST4  & 0.0431 &0.0063 &    0.0307 &    0.0554 & 0.0429 &0.0063 &0.0306 &0.0552 \\
\hline
\end{tabular}}
\end{center}
\end{table}

Table \ref{tab1} shows the posterior mean and standard deviation of the fixed effects with a 95\% credible interval, computed using INLA (simplified Laplace strategy). Point estimates obtained with PQL are also displayed with their standard error and 95\% confidence interval. For sex ratio, the estimate with Model ST2 and INLA is about 40\% (in absolute value) of the estimates obtained with Models ST1, ST3, and ST4, and the posterior SD is 5.5 times higher. For Model ST2, the 95\% credible interval includes 0, while the intervals from the other models are far from zero. The results with PQL are similar although Model ST2's confidence interval for sex ratio barely excludes 0. For population density, the estimate from Model ST2 is less than 10\% of the estimates obtained with Models ST1, ST3, and ST4, with INLA or PQL, and again the posterior SD with Model ST2 is about 5 times higher than with the other models. The consequence is that using Model ST2, the association between population density and dowry deaths is not significant. Per capita income has similar results. The effect of confounding on the estimated association with female literacy rate is also noteworthy:  with Model ST2, the estimated effect is negative (though not significant) whereas with the rest of models is positive and significant. The estimated associations with murder rate and burglary rate are similar for the four models, though the posterior SD is clearly larger for Model ST2. These results are revealing and illustrate the potential harmful consequences of ignoring the effects of confounding: the estimated association between the response and the covariate may be diluted or dramatically changed. Posterior estimates of the hyperparameters (standard deviations) are displayed in Table \ref{tab1B} in Supplementary Material B. Similar estimates for the standard deviations are obtained with Models ST2, ST3, and ST4.

\begin{table}[!t]
\caption{Model selection criteria and computing time (seconds) for spatio-temporal models fit with INLA and PQL. Computations were made on a twin superserver with four processors, Intel Xeon 6C and 96GB RAM, using the R-INLA (stable) version 19.09.03. \label{tab2}}
\begin{center}
\resizebox{\textwidth}{!}{
\begin{tabular}{l|rrrc|rrrc}
\\[-4ex]
& \multicolumn{4}{c|}{INLA (simplified Laplace)} & \multicolumn{4}{c}{PQL (tol=1e-5)}\\[0.5ex]
\hline
& $\bar{D}$ & $p_D$ & DIC & Time & Deviance & $Df$ & AIC & Time \\[0.5ex]
\hline
Model ST1  & 8471.72 &   7.18 & 8478.90 &   3 & 8466.20 &   7.00 & 8480.20 &   1 \\
Model ST2  & 5962.60 & 239.26 & 6201.86 &  18 & 5727.89 & 236.48 & 6200.85 &  74 \\
Model ST3  & 5962.59 & 239.27 & 6201.86 & 192 & 5727.85 & 236.46 & 6200.77 &  86 \\
Model ST4  & 6492.87 & 232.01 & 6724.88 &  22 & 6269.53 & 227.64 & 6724.81 & 180 \\
\hline
\end{tabular}}
\end{center}
\end{table}

Spatio-temporal models accounting for confounding (Models ST3 and ST4) lead to similar estimates of the fixed effects and their posterior standard deviations but they differ in terms of model selection criteria. Table \ref{tab2} displays the mean deviance ($\bar{D}$), the effective number of parameters $p_D$, DIC and computing time for the INLA fits. For the PQL fit, the deviance, the number of parameters ($DF$) and AIC are provided. As expected, the model without random effects has the worst fit;  the six covariates are not enough to explain the data's variation. For Models ST3 and ST4, the difference in fit is remarkable. Clearly, Model ST3 provides a much better fit:  the differences in $\bar{D}$ and DIC are about 500 points.
The INLA fit with a simplified Laplace strategy and using constraints (Model ST4) is much faster than restricted regression (22 and 192 seconds respectively). Computing time for Model ST3 in INLA has been reduced (about one half at least) by plugging in the posterior modes of the hyperparameters obtained from Model ST2 as initial values. Note that the total time required to fit Model ST3 should include the computing time of Model ST2 (18 seconds in our data analysis).
Similarly, computing times for Models ST3 and ST4 in PQL have been reduced about 23\% using the variance components estimates obtained in Model ST2 as initial values in the estimation algorithm. In this case, the computing times shown in Table \ref{tab2} correspond to the total time needed to fit the corresponding models. Additionally, Supplementary Material B provides scatter plots of the estimated relative risks from Models ST2, ST3, and ST4 fitted with INLA and PQL (Figure \ref{figB2});  posterior spatial patterns (top row), posterior temporal patterns (middle row) obtained from Models ST2, ST3, and ST4 fitted with INLA, and posterior spatio-temporal patterns (see \citealp{adin2017smoothing}) for three districts, Agra, Balrampur, and Gautam Buddha Nagar (bottom row) in Figure \ref{figB3}. Finally, INLA relative risk estimates (posterior means) obtained with models ST3 and ST4 are shown in Figure \ref{figB4} for the same three districts.

Why do Models ST3 and ST4 fit so differently? First, the spatial and temporal terms in Models ST4 are $({\1}_T \otimes {\I}_S ){\xxi}$ and $({\I}_T \otimes {\1}_S) {\ggamma}$ respectively. Consequently the spatial effects are repeated in every year and the temporal effects are repeated in each area. In Model ST3, however, the spatial and temporal terms $({\1}_T \otimes {\U}_{\xi r} ){\xxi}$ and $({\U}_{\gamma r} \otimes {\1}_S){\ggamma}$ are premultiplied by ${\hat\W}^{-1/2}{\LL}{\LL}^{'}{\hat\W}^{1/2}$. Because the matrix ${\LL}$ contains spatio-temporal information --- it depends on the spatio-temporal covariate ${\X}$ --- the spatial effect for the $i$th area is different in each time period and the temporal effect in year $t$ is also different for each area. Second, no substantial change in relation to Model ST2 is made in Model ST3. Basically, the random effects in this latter model are split into two pieces, one in the span of the covariates and one orthogonal to the covariates. When Model ST3 removes the part in the span of the covariates, it is simply discarding redundant information. By contrast, Model ST4 changes the model by forcing the random effects to be orthogonal to the fixed effects. Moreover, the spatial random effects lie in the space orthogonal to the time-weighted-added covariates, and the temporal random effects are orthogonal to the spatial-weighted-added covariates. This is equivalent to an oblique projection onto the orthogonal subspace of the fixed effects, unlike Model ST3, where the projection is orthogonal. Because the orthogonal projection minimizes the distance between the original random effects and the projection, this could explain the improvements in fit over the oblique projection (constraints).

Finally, note that we also fit Model ST3 without premultiplying the temporal and spatio-temporal effects by ${\hat\W}^{-1/2}{\LL}{\LL}^{'}{\hat\W}^{1/2}$ and the results are nearly identical, indicating that all confounding arises from the spatial term. We also fit Model ST3 premultiplying by ${\hat\W}^{-1/2}{\LL}{\LL}^{'}{\hat\W}^{1/2}$ only to the temporal and the spatio-temporal effects, and the confounding effects are not avoided.

\section{Discussion}

Including spatially correlated random effects in a model can seriously affect inference about fixed effects due to confounding. This is particularly dramatic in ecological spatial regression where the main objective is to estimate associations between the response variable and certain covariates. These relationships can be masked due to bias and variance inflation of the fixed effects caused by confounding.  Though documented in the literature, spatial confounding has generally been ignored in applications and this practice has carried over to spatio-temporal settings. Here we study confounding in spatio-temporal ecological models in which including temporally correlated random effects and space-time interaction random effects (spatially and temporally correlated) can exacerbate confounding problems.  We have considered two procedures to remedy the potentially harmful effects of confounding, restricted regression and constraints.

In light of this paper's results, we would like to emphasize some points and provide some guidelines to practitioners. First, the relative risk estimates are not affected by confounding, so if the relative risks are of primary interest, ignoring confounding is not a problem. Second, both restricted regression and orthogonal constraints alleviate confounding and provide rather similar estimates of the fixed effects and their standard errors. However, the two approaches differ importantly in terms of model selection criteria and computing time. Although the constraints approach is computationally more efficient than restricted regression (in INLA), the latter gives clearly better fits. Consequently, if the target of the analysis is to establish associations between risk factors and the phenomenon under study, along with studying spatio-temporal patterns of risk, we recommend using restricted regression. We have also observed that differences between the approaches become more evident as the number of covariates increases. The difference in fits may arise because the deviance, and hence DIC (and AIC), are directly related to the orthogonal projection, the projection used in restricted regression. Because the orthogonal projection minimizes least squares, and $\bar{D}$ (mean deviance) is a transform of least squares, any deviation from the orthogonal projection gives worse mean deviance and worse DIC as long as the effective number of parameters $p_D$ does not change much. Consequently, the more the oblique projection (constrained approach) deviates from the orthogonal projection, the worse the mean deviance and thus DIC. In the data analysis considered here, the constrained approach has the smallest value of $p_D$ because some degrees of freedom are removed due to the constraints, but DIC is worse because the reduction in $p_D$ does not compensate for the increase in mean deviance.

A desirable property of the constrained approach is that it keeps the (random) spatial and temporal main effects constant in time and space respectively while restricting all random effects to be orthogonal to the fixed effects. However, at least in the data set analysed here, this behavior comes at an important price in terms of model fit, unlike other proposals such as, e.g., \cite{hughes2013dimension}. It is not easy to guess when the constrained and restricted regression approaches will lead to similar fits. We suspect that if the covariates do not have a substantial spatio-temporal interaction the constrained approached could work well. This is consistent with the observation that in a spatial analysis of our data for the year 2011 (not shown), both procedures are nearly equivalent. Moreover, fitting spatio-temporal models including only female literacy rate (a covariate with scarcely any spatio-temporal interaction), the difference in fit between both approaches is reduced considerably.

Both procedures have been fitted using a fully Bayesian and a classical approach. Though INLA provides the posterior distributions of all quantities and hence the maximum information, PQL is still a valuable tool that allows model fitting in a reasonable time providing essential information to understand the phenomenon under study. Computing times for restricted regression can be reduced in INLA by plugging in as initial values the posterior modes of the hyperparameters obtained from the spatio-temporal model with confounding. Similarly, restricted regression can be sped up in PQL by using as initial values the variance parameter estimates obtained from the spatio-temporal model with confounding.

To sum up, models with spatial, temporal, and spatio-temporal random effects lead to a good fit of the spatio-temporal patterns of risk, but they fail to account for the correct linear association between the response and the covariates. Models including only covariates provide unbiased estimates of the fixed effects coefficients but they give a poor fit as they do not capture the variability unexplained by the covariates. The restricted regression considered in this paper offers in one single model the best of the two approaches: similar to the model without random effects, it provides correct estimates of the fixed effects but substantially improves model fit and prediction of spatio-temporal patterns of risk in line with the model with random effects.  This is an advantage of the restricted regression approach over the models without random effects and classical spatio-temporal models. Recent research \citep{khan2020restricted} concludes that for Gaussian models credible intervals for fixed effects obtained with restricted regression are nested within credible intervals obtained with models without spatial random effects. These authors acknowledge that restricted regression inference on the coefficients is very similar to the one from non spatial models. In the Poisson spatio-temporal case study analysed here, credible intervals for some of the coefficients in the restricted regression approach are not nested within those of the model without random effects, but differences are small. Further research is needed to investigate this issue more in depth. The use of constraints also provides correct estimates of the fixed effects and improves model fit, but to a lesser extent than the restricted regression approach, at least in the data set analysed here. We would like to highlight that in this work, associations between the covariates and the response should be understood as correlations, with the final objective of identifying good predictors of the response, and not as causal relationships within a causal-inference framework. This latter approach is beyond the scope of this paper.

Finally, we want to emphasize the consequences of ignoring confounding for the data analysed here. Dowry death in India, particularly in Uttar Pradesh, is a complex problem for which risk factors (socio-demographic, economic, cultural or religious) are not yet clearly identified. Ignoring confounding may lead researchers to discard some potential risk factors and wrongly estimate their associations with dowry death. In this paper, sex ratio, population density, female literacy rate, per capita income, murder rate, and burglary rate were found to be associated with dowry deaths when confounding was taken into account. Ignoring such effects masks the association between dowry deaths and some of those risk factors, which obscures understanding of this atrocious practice that takes the lives of thousands of women in India.

%%% Acknowledgements (if any)
%%% ------------------------------------------
\section*{Acknowledgements}
This work has been supported by Project MTM2017-82553-R (AEI/FEDER, UE). It has also been
partially funded by \lq\lq la Caixa'' Foundation (ID 1000010434), Caja Navarra Foundation, and UNED Pamplona, under agreement LCF/PR/PR15/51100007.

%%%%%%%%%%%%%%%%%%%%%%%%%%%%%%%%%%%%%%%%%%%%%%%%%%%%%%%%%%%%%%%%%%%%%%%%%%%%%%%%%%%%%%%%%%%%%%%%%%
%%%%%%%%%%%%%%%%%%%%%%%%%%%%%%%%%%%%%%%%%%%%%%%%%%%%%%%%%%%%%%%%%%%%%%%%%%%%%%%%%%%%%%%%%%%%%%%%%%
%%%%%%%%%%%%%%%%%%%%%%%%%%%%%%%%%%%%%%%%%%%%%%%%%%%%%%%%%%%%%%%%%%%%%%%%%%%%%%%%%%%%%%%%%%%%%%%%%%
%%%%%%%%%%%%%%%%%%%%%%%%%%%%%%%%%%%%%%%%%%%%%%%%%%%%%%%%%%%%%%%%%%%%%%%%%%%%%%%%%%%%%%%%%%%%%%%%%%

%**************% %  SOFTWARE  % %**************%
\section*{Software}
\label{Software}

Software in the form of R code, together with a sample input data set and complete documentation will be available at \url{https://github.com/spatialstatisticsupna/Confounding_article}.

%%%%%%%%%%%%%%%%%%%%%%%%%%%%%%%%%%%%%%%%%%%%%%%%%%%%%%%%%%%%%%%%%%%%%%%%%%%%%%%%%%%%%%%%%%%%%%%%%%

%**************% %  Appendix  % %**************%

\section*{Appendix}
This subsection is about adding orthogonality constraints in the PQL approach, in particular about modifying the algorithm to replace the sum-to-zero identifiability constraints with new constraints that identify the model and also achieve the desired orthogonality. Here we use ``conditioning by kriging'' \citep[pp.~37 and 93]{rue2005gaussian} where the covariance matrix of the random effects conditional on the constraints is used. Consider the covariance matrices

{\small
\begin{eqnarray*}
% \nonumber to remove numbering (before each equation)
\text{Cov}({\xxi}|{\B}_{\xi}{\xxi}=0) & = & {\Q}_{\xi}^{-}-{\Q}_{\xi}^{-}{\B}_{\xi}^{'}({\B}_{\xi}{\Q}_{\xi}^{-}{\B}_{\xi}^{'})^{-1}{\B}_{\xi}{\Q}_{\xi}^{-}\\[1.ex]
\text{Cov}({\ggamma}|{\B}_{\gamma}{\ggamma}=0) & = & {\Q}_{\gamma}^{-}-{\Q}_{\gamma}^{-}{\B}_{\gamma}^{'}({\B}_{\gamma}{\Q}_{\gamma}^{-}{\B}_{\gamma}^{'})^{-1}{\B}_{\gamma}{\Q}_{\gamma}^{-} \\[1.ex]
\text{Cov}({\ddelta}|{\B}_{\delta}{\ddelta}=0) & = & {\Q}_{\delta}^{-}-{\Q}_{\delta}^{-}{\B}_{\delta}^{'}({\B}_{\delta}{\Q}_{\delta}^{-}{\B}_{\delta}^{'})^{-1}{\B}_{\delta}{\Q}_{\delta}^{-}
\end{eqnarray*}
}

\noindent where ``$^-$" denotes the Moore-Penrose generalized inverse, and the matrices ${\B}_{\xi}$, ${\B}_{\gamma}$, and ${\B}_{\delta}$ capture the orthogonality constraints of the previous section and are defined in Equation \eqref{matrixB}.
Using these covariance matrices, the PQL algorithm would automatically place the desired orthogonality constraints. However, due to the rank deficiency of the matrices ${\Q}_{\xi}$, ${\Q}_{\gamma}$, and ${\Q}_{\delta}$, the usual sum-to-zero constraints are also imposed in addition to the weighted sum-to-zero constraints. To overcome that problem, consider instead these covariance matrices (see Theorem 1 in Supplementary Material A for details):

{\small
\begin{eqnarray*}%\label{covariance.changed}
% \nonumber to remove numbering (before each equation)
\text{Cov}({\xxi}|{\B}_{\xi}{\xxi}={\0}) & = & {\V}_{\xi} = {\LL}_{\xi}[{\LL}_{\xi}^{'}{\Q}_{\xi}{\LL}_{\xi}]^{-1}{\LL}_{\xi}^{'}, \nonumber\\%[1.ex]
\text{Cov}({\ggamma}|{\B}_{\gamma}{\ggamma}={\0}) & = & {\V}_{\gamma} = {\LL}_{\gamma}[{\LL}_{\gamma}^{'}{\Q}_{\gamma}{\LL}_{\gamma}]^{-1}{\LL}_{\gamma}^{'}, \nonumber \\%[1.ex]
\text{Cov}({\ddelta}|{\B}_{\delta}{\ddelta}={\0}) & = & {\V}_{\delta} = {\LL}_{\delta}[{\LL}_{\delta}^{'}{\Q}_{\delta}{\LL}_{\delta}]^{-1}{\LL}_{\delta}^{'}, \nonumber
\end{eqnarray*}
}

\noindent where ${\LL}_{\xi}$, ${\LL}_{\gamma}$, and ${\LL}_{\delta}$ are as in Equation \eqref{projections}.
%\bigskip
Doing this, the null spaces of the covariance matrices are now spanned by the vectors of constraints, that is the rows of the matrices ${\B}_{\xi}$, ${\B}_{\gamma}$, and ${\B}_{\delta}$,  and the PQL algorithm automatically circumvents identifiability issues and provides estimates satisfying the orthogonality requirements. To see this briefly, the PQL algorithm requires a working vector
\begin{eqnarray*}%\label{PQL.working}
 {\boldsymbol O}^*={\X}_*{\bbeta}+{\Z}_{\xi}{\xxi}+{\Z}_{\gamma}{\ggamma}+{\Z}_{\delta}{\ddelta}+({\boldsymbol O}-{\mmu})g'({\mmu}),
\end{eqnarray*}
where  ${\Z}_{\xi}={\1}_{T}\otimes {\I}_{S}$, ${\Z}_{\gamma}={\I}_{T}\otimes {\1}_{S}$, and ${\Z}_{\delta}={\I}_{TS}$ are the design matrices of the spatial and temporal main effects and the interaction effect respectively; $g'(\mu)=1/\mu$ is the derivative of the link function $g$, which here is the logarithmic function, ${\eepsi}=({\boldsymbol O}-{\mmu})g'({\mmu})\sim N({\0},{\hat\W}^{-1})$, and ${{\hat\W}}=diag(\mu_{it})$. Then the fixed effect estimator (including the intercept) is $\hat{\bbeta}=({\X}_*^{'}\hat{\V}^{-1}{\X}_*)^{-1}{\X}_*^{'}\hat{\V}^{-1}{\boldsymbol O}^*$, where ${\V}={\W}^{-1}+{\Z}_{\xi}{\V}_{\xi}{\Z}_{\xi}^{'}+{\Z}_{\gamma}{\V}_{\gamma}{\Z}_{\gamma}^{'}+{\Z}_{\delta}{\V}_{\delta}{\Z}_{\delta}^{'}$
and the random effects are estimated as
$$  \hat{\xxi}= \hat{\V}_{\xi}{\Z}_{\xi}^{'}\hat{\V}^{-1}({\boldsymbol O}^*-{\X}_*\hat{\bbeta}),\,  \hat{\ggamma} = \hat{\V}_{\gamma}{\Z}_{\gamma}^{'}\hat{\V}^{-1}({\boldsymbol O}^*-{\X}_*\hat{\bbeta}),\,
 \hat{\ddelta} = \hat{\V}_{\delta}{\Z}_{\delta}^{'}\hat{\V}^{-1}({\boldsymbol O}^*-{\X}_*\hat{\bbeta}).$$

\noindent
Clearly, ${\B}_{\xi}{\hat{\xxi}}=\mat{0}$,  ${\B}_{\gamma}\hat{\ggamma}=\mat{0}$, and ${\B}_{\delta}\hat{\ddelta}=\mat{0}$, as the rows of ${\B}_{\xi}$, ${\B}_{\gamma}$, and ${\B}_{\delta}$, span the null spaces of ${\V}_{\xi}$, ${\V}_{\gamma}$, and ${\V}_{\delta}$ respectively.

%%% References if bibTeX is used
%%%
%%% Please, do not specify any \bibliographystyle{} command!
%%%
%%% It is already specified in the smj.cls and its
%%% second specification here causes error.
%%% ------------------------------------------------------------

%\bibliography{biblio_confounding}

\begin{thebibliography}{32}
\providecommand{\natexlab}[1]{#1}
\providecommand{\url}[1]{\texttt{#1}}
\expandafter\ifx\csname urlstyle\endcsname\relax
  \providecommand{\doi}[1]{doi: #1}\else
  \providecommand{\doi}{doi: \begingroup \urlstyle{rm}\Url}\fi

\bibitem[Adin et~al.(2017)Adin, Mart{\'\i}nez-Beneito, Botella-Rocamora,
  Goicoa, and Ugarte]{adin2017smoothing}
{\rm Adin, A., Mart{\'\i}nez-Beneito, M.~A., Botella-Rocamora, P., Goicoa, T.,
  {\rm and} Ugarte, M.~D.} (2017).
\newblock {Smoothing and high risk areas detection in space-time disease
  mapping: a comparison of P-splines, autoregressive, and moving average
  models}.
\newblock \emph{Stochastic Environmental Research and Risk Assessment}, {\bf
  31}\penalty0 (2), \penalty0 403--415.

\bibitem[Akaike(1974)]{akaike1974new}
{\rm Akaike, H.} (1974).
\newblock A new look at the statistical model identification.
\newblock \emph{IEEE Transactions on Automatic Control}, {\bf 19}\penalty0 (6),
  \penalty0 716--723.

\bibitem[Banerjee(2014)]{banerjee2014dowry}
{\rm Banerjee, P.~R.} (2014).
\newblock {Dowry in 21st-century India: the sociocultural face of
  exploitation}.
\newblock \emph{Trauma, Violence, \& Abuse}, {\bf 15}\penalty0 (1), \penalty0
  34--40.

\bibitem[Besag(1974)]{besag1974spatial}
{\rm Besag, J.} (1974).
\newblock Spatial interaction and the statistical analysis of lattice systems.
\newblock \emph{Journal of the Royal Statistical Society: Series B
  (Methodological)}, {\bf 36}\penalty0 (2), \penalty0 192--225.

\bibitem[Breslow and Clayton(1993)]{breslow1993approximate}
{\rm Breslow, N.~E. {\rm and} Clayton, D.~G.} (1993).
\newblock Approximate inference in generalized linear mixed models.
\newblock \emph{Journal of the American Statistical Association}, {\bf
  88}\penalty0 (421), \penalty0 9--25.

\bibitem[Clayton et~al.(1993)Clayton, Bernardinelli, and
  Montomoli]{clayton1993spatial}
{\rm Clayton, D.~G., Bernardinelli, L., {\rm and} Montomoli, C.} (1993).
\newblock Spatial correlation in ecological analysis.
\newblock \emph{International Journal of Epidemiology}, {\bf 22}\penalty0 (6),
  \penalty0 1193--1202.

\bibitem[Dang et~al.(2018)Dang, Kulkarni, and Gaiha]{dang2018dowry}
{\rm Dang, G., Kulkarni, V.~S., {\rm and} Gaiha, R.} (2018).
\newblock {Why Dowry Deaths Have Risen in India?}
\newblock Technical report, ASARC Working Paper, 2018/03.
\newblock URL
  \url{https://crawford.anu.edu.au/acde/asarc/pdf/papers/2018/WP2018-03.pdf}.

\bibitem[Dean et~al.(2001)Dean, Ugarte, and Militino]{dean2001detecting}
{\rm Dean, C.~B., Ugarte, M.~D., {\rm and} Militino, A.~F.} (2001).
\newblock Detecting interaction between random region and fixed age effects in
  disease mapping.
\newblock \emph{Biometrics}, {\bf 57}\penalty0 (1), \penalty0 197--202.

\bibitem[Dean et~al.(2004)Dean, Ugarte, and Militino]{dean2004penalized}
{\rm Dean, C.~B., Ugarte, M.~D., {\rm and} Militino, A.~F.} (2004).
\newblock Penalized quasi-likelihood with spatially correlated data.
\newblock \emph{Computational Statistics \& Data Analysis}, {\bf 45}\penalty0
  (2), \penalty0 235--248.

\bibitem[Fahrmeir and Lang(2000)]{Fahrmeir0}
{\rm Fahrmeir, L. {\rm and} Lang, S.} (2000).
\newblock Bayesian semiparametric regression analysis of multicategorical
  time-space data.
\newblock \emph{Annals of the Institute of Statistical Mathematics}, {\bf 53},
  \penalty0 10--30.

\bibitem[Goicoa et~al.(2018)Goicoa, Adin, Ugarte, and Hodges]{goicoa2018spatio}
{\rm Goicoa, T., Adin, A., Ugarte, M.~D., {\rm and} Hodges, J.~S.} (2018).
\newblock {In spatio-temporal disease mapping models, identifiability
  constraints affect PQL and INLA results}.
\newblock \emph{Stochastic Environmental Research and Risk Assessment}, {\bf
  32}\penalty0 (3), \penalty0 749--770.

\bibitem[Hanks et~al.(2015)Hanks, Schliep, Hooten, and
  Hoeting]{hanks2015restricted}
{\rm Hanks, E.~M., Schliep, E.~M., Hooten, M.~B., {\rm and} Hoeting, J.~A.}
  (2015).
\newblock Restricted spatial regression in practice: geostatistical models,
  confounding, and robustness under model misspecification.
\newblock \emph{Environmetrics}, {\bf 26}\penalty0 (4), \penalty0 243--254.

\bibitem[Hodges and Reich(2010)]{hodges2010adding}
{\rm Hodges, J.~S. {\rm and} Reich, B.~J.} (2010).
\newblock Adding spatially-correlated errors can mess up the fixed effect you
  love.
\newblock \emph{The American Statistician}, {\bf 64}\penalty0 (4), \penalty0
  325--334.

\bibitem[Hughes and Haran(2013)]{hughes2013dimension}
{\rm Hughes, J. {\rm and} Haran, M.} (2013).
\newblock Dimension reduction and alleviation of confounding for spatial
  generalized linear mixed models.
\newblock \emph{Journal of the Royal Statistical Society: Series B (Statistical
  Methodology)}, {\bf 75}\penalty0 (1), \penalty0 139--159.

\bibitem[Khan and Calder(2020)]{khan2020restricted}
{\rm Khan, K. {\rm and} Calder, C.~A.} (2020).
\newblock Restricted spatial regression methods: Implications for inference.
\newblock \emph{Journal of the American Statistical Association}, {\bf
  0}\penalty0 (0), \penalty0 1--13.

\bibitem[Kneib et~al.(2019)Kneib, Klein, Lang, and Umlauf]{kneib2019test}
{\rm Kneib, T., Klein, N., Lang, S., {\rm and} Umlauf, N.} (2019).
\newblock {Modular regression - a Lego system for building structured additive
  distributional regression models with tensor product interactions}.
\newblock \emph{TEST}, {\bf 28}\penalty0 (1), \penalty0 1--39.

\bibitem[Knorr-Held(2000)]{held2000bayesian}
{\rm Knorr-Held, L.} (2000).
\newblock Bayesian modelling of inseparable space-time variation in disease
  risk.
\newblock \emph{Statistics in Medicine}, {\bf 19}\penalty0 (17-18), \penalty0
  2555--2567.

\bibitem[Mart{\'\i}nez-Beneito and Botella-Rocamora(2019)]{martinez2019disease}
{\rm Mart{\'\i}nez-Beneito, M.~A. {\rm and} Botella-Rocamora, P.} (2019).
\newblock \emph{{Disease Mapping: From Foundations to Multidimensional
  Modeling}}.
\newblock CRC Press, Boca Raton.

\bibitem[Mukherjee et~al.(2001)Mukherjee, Rustagi, and
  Krishnaji]{mukherjee2001crimes}
{\rm Mukherjee, C., Rustagi, P., {\rm and} Krishnaji, N.} (2001).
\newblock {Crimes against women in India: Analysis of official statistics}.
\newblock \emph{Economic and Political Weekly}, pages 4070--4080.

\bibitem[Page et~al.(2017)Page, Liu, He, and Sun]{page2017estimation}
{\rm Page, G.~L., Liu, Y., He, Z., {\rm and} Sun, D.} (2017).
\newblock Estimation and prediction in the presence of spatial confounding for
  spatial linear models.
\newblock \emph{Scandinavian Journal of Statistics}, {\bf 44}\penalty0 (3),
  \penalty0 780--797.

\bibitem[Papadogeorgou et~al.(2019)Papadogeorgou, Choirat, and
  Zigler]{papadogeorgou2019adjusting}
{\rm Papadogeorgou, G., Choirat, C., {\rm and} Zigler, C.~M.} (2019).
\newblock Adjusting for unmeasured spatial confounding with distance adjusted
  propensity score matching.
\newblock \emph{Biostatistics}, {\bf 20}\penalty0 (2), \penalty0 256--272.

\bibitem[Prates et~al.(2019)Prates, Assun{\c{c}}{\~a}o, Rodrigues,
  et~al.]{prates2019alleviating}
{\rm Prates, M.~O., Assun{\c{c}}{\~a}o, R.~M., Rodrigues, E.~C., et~al.}
  (2019).
\newblock Alleviating spatial confounding for areal data problems by displacing
  the geographical centroids.
\newblock \emph{Bayesian Analysis}, {\bf 14}\penalty0 (2), \penalty0 623--647.

\bibitem[Reich et~al.(2006)Reich, Hodges, and Zadnik]{reich2006effects}
{\rm Reich, B.~J., Hodges, J.~S., {\rm and} Zadnik, V.} (2006).
\newblock Effects of residual smoothing on the posterior of the fixed effects
  in disease-mapping models.
\newblock \emph{Biometrics}, {\bf 62}\penalty0 (4), \penalty0 1197--1206.

\bibitem[Rue and Held(2005)]{rue2005gaussian}
{\rm Rue, H. {\rm and} Held, L.} (2005).
\newblock \emph{Gaussian Markov random fields: theory and applications}.
\newblock CRC Press, Boca Raton.

\bibitem[Rue et~al.(2009)Rue, Martino, and Chopin]{rue2009approximate}
{\rm Rue, H., Martino, S., {\rm and} Chopin, N.} (2009).
\newblock {Approximate Bayesian inference for latent Gaussian models by using
  integrated nested Laplace approximations}.
\newblock \emph{Journal of the Royal Statistical Society: Series B (Statistical
  Methodology)}, {\bf 71}\penalty0 (2), \penalty0 319--392.

\bibitem[Schnell and Bose(2019)]{schnell2019spectral}
{\rm Schnell, P. {\rm and} Bose, M.} (2019).
\newblock Spectral parameterization for linear mixed models applied to
  confounding of fixed effects by random effects.
\newblock \emph{Journal of Statistical Planning and Inference}, {\bf 200},
  \penalty0 47--62.

\bibitem[Simpson et~al.(2017)Simpson, Rue, Riebler, Martins, and
  Sørbye]{simpson2017}
{\rm Simpson, D., Rue, H., Riebler, A., Martins, T.~G., {\rm and} Sørbye,
  S.~H.} (2017).
\newblock Penalising model component complexity: A principled, practical
  approach to constructing priors.
\newblock \emph{Statistical Science}, {\bf 32}\penalty0 (1), \penalty0 1--28.

\bibitem[Spiegelhalter et~al.(2002)Spiegelhalter, Best, Carlin, and Van
  Der~Linde]{spiegelhalter2002bayesian}
{\rm Spiegelhalter, D.~J., Best, N.~G., Carlin, B.~P., {\rm and} Van Der~Linde,
  A.} (2002).
\newblock Bayesian measures of model complexity and fit.
\newblock \emph{Journal of the Royal Statistical Society: Series B (Statistical
  Methodology)}, {\bf 64}\penalty0 (4), \penalty0 583--639.

\bibitem[Ugarte et~al.(2010)Ugarte, Goicoa, and Militino]{ugarte2010spatio}
{\rm Ugarte, M.~D., Goicoa, T., {\rm and} Militino, A.} (2010).
\newblock Spatio-temporal modeling of mortality risks using penalized splines.
\newblock \emph{Environmetrics}, {\bf 21}\penalty0 (3-4), \penalty0 270--289.

\bibitem[Ugarte et~al.(2012)Ugarte, Goicoa, Etxeberria, and
  Militino]{ugarte2012splines}
{\rm Ugarte, M.~D., Goicoa, T., Etxeberria, J., {\rm and} Militino, A.} (2012).
\newblock A P-spline ANOVA type model in space-time disease mapping.
\newblock \emph{Stochastic Environmental Research and Risk Assessment}, {\bf
  26}\penalty0 (6), \penalty0 835--845.

\bibitem[Vicente et~al.(2020)Vicente, Goicoa, Fernandez-Rasines, and
  Ugarte]{vicente2020crime}
{\rm Vicente, G., Goicoa, T., Fernandez-Rasines, P., {\rm and} Ugarte, M.~D.}
  (2020).
\newblock {Crime against women in India: unveiling spatial patterns and
  temporal trends of dowry deaths in the districts of Uttar Pradesh}.
\newblock \emph{Journal of the Royal Statistical Society: Series A (Statistics
  in Society)}, {\bf 183}, \penalty0 655--679.

\bibitem[Zadnik and Reich(2006)]{zadnik2006analysis}
{\rm Zadnik, V. {\rm and} Reich, B.~J.} (2006).
\newblock {Analysis of the relationship between socioeconomic factors and
  stomach cancer incidence in Slovenia}.
\newblock \emph{Neoplasma}, {\bf 53}\penalty0 (2), \penalty0 103--110.

\end{thebibliography}

%%% References (if created by hand).

\newpage

%%%%%%%%%%%%%%%%%%%%%%%%%%%%%%%%%%%%%%%%%%%%%%%%%%%%%%%%%%%%%%%%%%%%%%%%%%%
% Supplementary material
%%%%%%%%%%%%%%%%%%%%%%%%%%%%%%%%%%%%%%%%%%%%%%%%%%%%%%%%%%%%%%%%%%%%%%%%%%%
\setcounter{section}{0} %\setcounter{equation}{0}
\renewcommand{\thesection}{\Alph{section}}

\setcounter{figure}{0}
\renewcommand\thefigure{\thesection.\arabic{figure}}

\setcounter{table}{0}
\renewcommand\thetable{\thesection.\arabic{table}}

%**************% %  Supplementary material % %**************%
\section{Supplementary Material A}
\label{appendixA}

This section shows that placing constraints is equivalent to an oblique projection. In particular, we focus on the situation studied in the paper where constraints on Gaussian variables (the random effects) are specified by a precision matrix without constraints on the null space. In the paper, the precision matrices of the random effects are rank deficient and usually sum-to-zero constraints corresponding to the null space are required to fit the model. However, as we are making the random effects orthogonal to the fixed effects (intercept included), the usual sum-to-zero constraints are not required and must be replaced with weighted sum-to-zero constraints.

\begin{theorem}\label{Theo1}
  Let $\vec{Y}$ be a random variable of length $n$ with density
  \begin{equation}
    p_Y(\vec{y}) \propto \exp\left[-\frac{1}{2} \vec{y}^{'} \mat{Q} \vec{y}\right],
  \end{equation}
  with $\mat{Q}$ not necessarily of full rank.
  Let $\col{A}$ be a subspace of ${\RR}^n$ such that $\col{A} \cap \kernel(\mat{Q}) = \{\vec{0}\}$, and $\kernel(\mat{Q})$ stands for the null space (kernel) of ${\Q}$.
  Let $\mat{A}$ and $\mat{B}$ be matrices with rows that form orthonormal bases for $\col{A}$ and its orthogonal complement $\col{A}^\perp$, respectively.
  Finally, let
  \begin{equation}
    \mat{P}_A = \mat{A}^{'} (\mat{A} \mat{Q} \mat{A}^{'})^{-1} \mat{A} \mat{Q}.
  \end{equation}
  Then the following distributions are equal:
  \begin{enumerate}
  \item $[\vec{Y} | \mat{B} \vec{Y} = \vec{0}]$;
  \item $[\vec{Y} | \vec{Y} \in \col{A}]$;
  \item $[\mat{P}_A \vec{Y}]$;
  \item $\N[\vec{0}, \mat{A}^{'} (\mat{A} \mat{Q} \mat{A}^{'})^{-1} \mat{A}]$.
  \end{enumerate}
\end{theorem}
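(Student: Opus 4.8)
The plan is to prove (1)$=$(2), then (2)$=$(4), then (3)$=$(4); the four distributions then all coincide. Throughout, $\mat{Q}$ is symmetric and positive semidefinite, being a precision matrix. The equality (1)$=$(2) is immediate: since the rows of $\mat{B}$ are an orthonormal basis of $\col{A}^{\perp}$, the event $\{\mat{B}\vec{Y}=\vec{0}\}$ says precisely that $\vec{Y}$ is orthogonal to $\col{A}^{\perp}$, i.e. $\vec{Y}\in(\col{A}^{\perp})^{\perp}=\col{A}$, so the two conditioning events are the same subset of $\RR^{n}$ and the conditional laws agree.

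For (2)$=$(4) I would parametrize $\col{A}$ by $\vec{z}\mapsto\mat{A}^{'}\vec{z}$, $\vec{z}\in\RR^{k}$ with $k=\dim\col{A}$; since $\mat{A}\mat{A}^{'}=\mat{I}_{k}$ this map is an isometry onto $\col{A}$, so it identifies Lebesgue measure on $\RR^{k}$ with the natural measure on $\col{A}$. Restricting $p_Y$ to $\col{A}$ and pulling back along this parametrization gives a density proportional to $\exp[-\tfrac12\vec{z}^{'}(\mat{A}\mat{Q}\mat{A}^{'})\vec{z}]$. I then check that $\mat{A}\mat{Q}\mat{A}^{'}$ is positive definite, which is where the transversality hypothesis enters: if $\vec{z}^{'}\mat{A}\mat{Q}\mat{A}^{'}\vec{z}=0$ then $\mat{A}^{'}\vec{z}\in\kernel(\mat{Q})$ because $\mat{Q}\succeq\mat{0}$, while also $\mat{A}^{'}\vec{z}\in\col{A}$, so $\mat{A}^{'}\vec{z}\in\col{A}\cap\kernel(\mat{Q})=\{\vec{0}\}$ and hence $\vec{z}=\vec{0}$. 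Thus the conditional law of $\vec{z}$ is $\N[\vec{0},(\mat{A}\mat{Q}\mat{A}^{'})^{-1}]$, and transporting back yields $[\vec{Y}\mid\vec{Y}\in\col{A}]=\N[\vec{0},\mat{A}^{'}(\mat{A}\mat{Q}\mat{A}^{'})^{-1}\mat{A}]$, which is the degenerate Gaussian in (4) (supported on $\col{A}$).

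For (3)$=$(4), a short calculation gives $\mat{P}_A^{2}=\mat{P}_A$, $\mathrm{range}(\mat{P}_A)=\col{A}$, and --- the key point --- $\kernel(\mat{Q})\subseteq\kernel(\mat{P}_A)$, since $\mat{Q}\vec{v}=\vec{0}$ forces $\mat{P}_A\vec{v}=\mat{A}^{'}(\mat{A}\mat{Q}\mat{A}^{'})^{-1}\mat{A}\mat{Q}\vec{v}=\vec{0}$. Splitting $\vec{Y}=\vec{Y}_0+\vec{Y}_1$ with $\vec{Y}_0\in\kernel(\mat{Q})$ and $\vec{Y}_1\in\mathrm{range}(\mat{Q})$, the improper density factors into a flat part in $\vec{Y}_0$ and a proper Gaussian part with $\vec{Y}_1\sim\N[\vec{0},\mat{Q}^{-}]$, where $\mat{Q}^{-}$ is the Moore--Penrose pseudoinverse; hence $\mat{P}_A\vec{Y}=\mat{P}_A\vec{Y}_1\sim\N[\vec{0},\mat{P}_A\mat{Q}^{-}\mat{P}_A^{'}]$. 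Using $\mat{P}_A^{'}=\mat{Q}\mat{A}^{'}(\mat{A}\mat{Q}\mat{A}^{'})^{-1}\mat{A}$ and $\mat{Q}\mat{Q}^{-}\mat{Q}=\mat{Q}$ (so $\mat{A}\mat{Q}\mat{Q}^{-}\mat{Q}\mat{A}^{'}=\mat{A}\mat{Q}\mat{A}^{'}$), this covariance collapses to $\mat{A}^{'}(\mat{A}\mat{Q}\mat{A}^{'})^{-1}\mat{A}$, i.e. (4).

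The main obstacle is the rank deficiency of $\mat{Q}$: the base law $[\vec{Y}]$ is improper, so genuine care is needed to make sense of the conditional law $[\vec{Y}\mid\vec{Y}\in\col{A}]$ and of the push-forward $[\mat{P}_A\vec{Y}]$. Everything hinges on two facts supplied by the hypotheses: the transversality $\col{A}\cap\kernel(\mat{Q})=\{\vec{0}\}$, which makes $\mat{A}\mat{Q}\mat{A}^{'}$ invertible and the conditioned law proper; and the inclusion $\kernel(\mat{Q})\subseteq\kernel(\mat{P}_A)$, which makes $\mat{P}_A\vec{Y}$ depend only on the proper Gaussian component of $\vec{Y}$ and hence have a well-defined (degenerate) Gaussian distribution.
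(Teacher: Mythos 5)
Your proof is correct and follows essentially the same route as the paper's: both hinge on the invertibility of $\mat{A}\mat{Q}\mat{A}^{'}$ supplied by $\col{A}\cap\kernel(\mat{Q})=\{\vec{0}\}$, both obtain the conditional law by passing to coordinates adapted to $\col{A}$ (your parametrization $\vec{y}=\mat{A}^{'}\vec{z}$ is exactly the first block of the paper's orthogonal change of variables $\vec{z}=\mat{M}\vec{y}$ with $\mat{M}=(\mat{A}^{'}\;\mat{B}^{'})^{'}$), and both compute $\Var[\mat{P}_A\vec{Y}]=\mat{P}_A\mat{Q}^{-}\mat{P}_A^{'}=\mat{A}^{'}(\mat{A}\mat{Q}\mat{A}^{'})^{-1}\mat{A}$ via $\mat{Q}\mat{Q}^{-}\mat{Q}=\mat{Q}$ after noting that the improper component of $\vec{Y}$ is annihilated by $\mat{P}_A$. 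The only (harmless) differences are that you prove positive definiteness of $\mat{A}\mat{Q}\mat{A}^{'}$ directly from $\mat{Q}\succeq\mat{0}$ rather than through the spectral decomposition $\mat{Q}=\mat{K}\mat{R}\mat{K}^{'}$, and you state the inclusion $\kernel(\mat{Q})\subseteq\kernel(\mat{P}_A)$ (which is all that is needed, and is in fact the correct statement when $\dim\kernel(\mat{Q})$ is smaller than the number of constraints) where the paper asserts equality.
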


\begin{proof}
  Let $\mat{M} = (\mat{A}^{'} \; \mat{B}^{'})^{'}$ be orthogonal where $\mat{A}$ and $\mat{B}$ are $(n-c)\times n$ and $c\times n$ matrices, and define $\vec{z} = \mat{M}\vec{y}$.
  Then
  \begin{equation}
    \begin{aligned}
      p_Z(\vec{z})
      &= p_Y(\mat{M}^{-1} \vec{z}) |\mat{M}^{-1}|, \\
      &\propto \exp\left[-\frac{1}{2} \vec{z}^{'} (\mat{M}^{-1})^{'}\mat{Q}\mat{M}^{-1} \vec{z} \right], \\
      &= \exp\left[-\frac{1}{2} \vec{z}^{'} (\mat{M}\mat{Q}\mat{M}^{'}) \vec{z} \right].
    \end{aligned}
  \end{equation}
  Letting
  \begin{equation*}
    \begin{pmatrix}
        \mat{K} & \mat{L}
      \end{pmatrix}
      \begin{pmatrix}
        \mat{R} & \mat{0} \\
        \mat{0} & \mat{0}
      \end{pmatrix}
      \begin{pmatrix}
        \mat{K}^{'} \\ \mat{L}^{'}
      \end{pmatrix}
  \end{equation*}
  be the spectral decomposition of $\mat{Q}$, where $\mat{K}$ and $\mat{L}$ are matrices with eigenvectors having non-null and null eigenvalues respectively, and $\mat{R}$ is a diagonal matrix with the non-null and positive eigenvalues, note that
  \begin{equation*}
    \begin{aligned}
      \mat{M} \mat{Q} \mat{M}^{'}
      &=
      \begin{pmatrix}
        \mat{A}\mat{Q}\mat{A}^{'} & \mat{A}\mat{Q}\mat{B}^{'} \\
        \mat{B}\mat{Q}\mat{A}^{'} & \mat{B}\mat{Q}\mat{B}^{'}
      \end{pmatrix} \\
      &=
      \begin{pmatrix}
        \mat{A} \\ \mat{B}
      \end{pmatrix}
      \begin{pmatrix}
        \mat{K} & \mat{L}
      \end{pmatrix}
      \begin{pmatrix}
        \mat{R} & \mat{0} \\
        \mat{0} & \mat{0}
      \end{pmatrix}
      \begin{pmatrix}
        \mat{K}^{'} \\ \mat{L}^{'}
      \end{pmatrix}
      \begin{pmatrix}
        \mat{A}^{'} & \mat{B}^{'}
      \end{pmatrix}.
    \end{aligned}
  \end{equation*}
  Then, $\mat{A}\mat{Q}\mat{A}^{'} = \mat{A} \mat{K} \mat{R} \mat{K}^{'} \mat{A}^{'}$.
  Let $\vec{x} \neq \vec{0}$.
  Since $\mat{A}$ is of full row rank, $\mat{A}^{'} \vec{x} \neq \vec{0}$, and $\mat{K}^{'} \mat{A}^{'} \vec{x} \neq \vec{0}$ as long as $\mat{A}^{'}\vec{x} \not\in \col{R}(\mat{K}^{'})^\perp = \col{R}(\mat{L}^{'}) = \kernel(\mat{Q})$, where $\col{R}()$ indicate the row space of a matrix.
  Since $\mat{A}^{'}\vec{x} \in \col{R}(\mat{A})$, and $\col{R}(\mat{A}) \cap \kernel(\mat{Q}) = \{\vec{0}\}$, $\mat{K}^{'} \mat{A}^{'} \vec{x} \neq \vec{0}$.
  Since $\mat{R}$ is trivially positive definite, $\vec{x}^{'} \mat{A}\mat{K}\mat{R}\mat{K}^{'}\mat{A}^{'}\vec{x} > 0$, thus $\mat{A}\mat{Q}\mat{A}^{'} = \mat{A}\mat{K}\mat{R}\mat{K}^{'}\mat{A}^{'}$ is positive definite and therefore invertible.
  We have
  \begin{equation}
    \begin{aligned}
      p(\vec{z}_{[1, n-c]} | \mat{B} \vec{y} = \vec{0})
      &= p(\vec{z}_{[1, n-c]} | \vec{z}_{[n-c+1, n]} = \vec{0}), \\
      &\propto  \exp\left[-\frac{1}{2} \vec{z}_{[1, n-c]}^{'} \mat{A}\mat{Q}\mat{A}^{'} \vec{z}_{[1, n-c]} \right],
    \end{aligned}
  \end{equation}
  where $\vec{z}_{[a, b]}=(z_a,\ldots,z_b)^{'}$. Then, since $\mat{A}\mat{Q}\mat{A}^{'}$ is invertible,
  \begin{equation}
      [\vec{Y} | \mat{B} \vec{Y} = \vec{0}]
      \sim \N\left[\vec{0}, \mat{V} \right], \quad
      \mat{V}
      =
      \begin{pmatrix}
        \mat{A}^{'} & \mat{B}^{'}
      \end{pmatrix}
      \begin{pmatrix}
        (\mat{A}\mat{Q}\mat{A}^{'})^{-1} & \vec{0} \\
        \vec{0}^{'} & \mat{0}
      \end{pmatrix}
      \begin{pmatrix}
        \mat{A} \\ \mat{B}
      \end{pmatrix}
      =
      \mat{A}^{'} (\mat{A}\mat{Q}\mat{A}^{'})^{-1}\mat{A}.
    \end{equation}

\noindent
When the Gaussian random variable $\vec{Y}$ is specified via the precision matrix $\mat{Q}$, $\kernel(\mat{Q})$ contains the unidentified degrees of freedom in the sense that $f_Y(\vec{y} + \vec{b}) = f_Y(\vec{y})$ for all $\vec{b} \in \kernel(\mat{Q})$.
  Note that $\kernel(\mat{P}_A) = \kernel(\mat{Q})$, so $\mat{P}_A (\vec{y} + \vec{b}) = \mat{P}_A \vec{y}$ for all $\vec{b} \in \kernel(\mat{Q})$, so $\mat{P}_A \vec{Y}$ is identified because changing $\vec{Y}$ adding $\vec{b}$ does not change $\mat{P}_A\vec{Y}$.
  Thus, when computing $\Var[\mat{P}_A \vec{Y}]$, we may restrict $\vec{Y}$ to $\kernel(\mat{Q})^\perp$ so that $\Var[\vec{Y}] = \mat{Q}^-$.
  Then,
  \begin{equation*}
    \begin{aligned}
      \Var[\mat{P}_A \vec{Y}]
      &=
      \left[\mat{A}^{'} (\mat{A}\mat{Q}\mat{A}^{'})^{-1}\mat{A}\mat{Q}\right] \mat{Q}^- \left[\mat{A}^{'} (\mat{A}\mat{Q}\mat{A}^{'})^{-1}\mat{A}\mat{Q}\right]^{'}, \\
      &=
      \mat{A}^{'} (\mat{A}\mat{Q}\mat{A}^{'})^{-1} \mat{A} \mat{Q}\mat{Q}^-\mat{Q}\mat{A}^{'}(\mat{A}\mat{Q}\mat{A}^{'})^{-1}\mat{A}, \\
      &=
      \mat{A}^{'} (\mat{A}\mat{Q}\mat{A}^{'})^{-1}\mat{A}\mat{Q}\mat{A}^{'}(\mat{A}\mat{Q}\mat{A}^{'})^{-1}\mat{A}, \\
      &= \mat{A}^{'}(\mat{A}\mat{Q}\mat{A}^{'})^{-1}\mat{A}.
    \end{aligned}
  \end{equation*}
\end{proof}

\noindent
Notes:\\

\noindent
In the situation described in the paper, we need $\mat{A}_{\xi}$ and $\mat{B}_{\xi}$, $\mat{A}_{\gamma}$ and $\mat{B}_{\gamma}$, and $\mat{A}_{\delta}$ and $\mat{B}_{\delta}$.
The $\mat{B}$ matrices (the constraints matrices) are given by Equation \eqref{matrixB}, and each $\mat{A}$ matrix may be constructed by taking its rows to be the eigenvectors of the orthogonal projection matrix $\mat{I}_n - \mat{B}^{'} (\mat{B} \mat{B}^{'})^{-1} \mat{B}$ whose eigenvalues are 1 (equivalently, non-zero).
Since the projection is orthogonal, its matrix is symmetric and admits a set of orthonormal eigenvectors forming a basis for ${\RR}^n$, and its null space $\col{R}(\mat{B})$ is orthogonal to its row space, the eigenvectors whose eigenvalues are non-zero.
The condition $\col{R}(\mat{A}) \cap \kernel(\mat{Q}) = \{\vec{0}\}$ ensures that the constraint $\mat{B}\vec{Y} = \vec{0}$ is sufficient to identify $\vec{Y}$.

\underline{Expressions for constraint matrices ${\B}_{\xi}$, ${\B}_{\gamma}$, and ${\B}_{\delta}$ }
{\footnotesize
\begin{eqnarray*}%\label{matrixB}
% \nonumber to remove numbering (before each equation)
  {\B}_{\xi} &=& {\X}_*^{'}{\hat\W}({\1}_T \otimes {\I}_S)=\left(
                                                           \begin{array}{ccc}
                                                           {\hat w}_{1.} & \ldots & {\hat w}_{S.} \\
                                                             ({\hat w}x_1)_{1.} & \cdots & ({\hat w}x_1)_{S.} \\
                                                             \vdots & \ddots & \vdots \\
                                                             ({\hat w}x_p)_{1.} & \cdots & ({\hat w}x_p)_{S.} \\
                                                           \end{array}
                                                         \right),\nonumber\\
  {\B}_{\gamma} &=&{\X}_*^{'}{\hat\W}({\I}_T \otimes {\1}_S)=\left(
                                                           \begin{array}{ccc}
                                                            {\hat w}_{.1} & \ldots & {\hat w}_{.T} \\
                                                             ({\hat w}x_1)_{.1} & \cdots & ({\hat w}x_1)_{.T} \\
                                                             \vdots & \ddots & \vdots \\
                                                             ({\hat w}x_p)_{.1} & \cdots & ({\hat w}x_p)_{.T} \\
                                                           \end{array}
                                                         \right), \nonumber\\
  {\B}_{\delta} &=& [({\1}_T \otimes {\I}_S) : ({\I}_T \otimes {\1}_S):{\X}]^{'}{\hat\W}=\left(
                                                           \begin{array}{ccccccc}
                                                             {\hat w}_{11}&\cdots& 0&\cdots&{\hat w}_{1T}&\cdots &0 \\
                                                            \vdots& \ddots &\vdots&\cdots&  \vdots&  \ddots&\vdots \\
                                                             0&\cdots& {\hat w}_{S1}& \cdots & 0&\cdots& {\hat w}_{ST}\\
                                                            {\hat w}_{11}&\cdots& {\hat w}_{S1}&\cdots&0&\cdots &0 \\
                                                            \vdots& \ddots &\vdots&\cdots&  \vdots&  \ddots&\vdots \\
                                                             0&\cdots& 0& \cdots & {\hat w}_{1T}&\cdots& {\hat w}_{ST}\\
                                                            x_{111}{\hat w}_{11}&\cdots &x_{1S1}{\hat w}_{S1}&\cdots&x_{11T}{\hat w}_{1T}&\cdots& x_{1ST}{\hat w}_{ST} \\
                                                             \vdots& \ddots &\vdots&\cdots&  \vdots&  \ddots&\vdots \\
                                                            x_{p11}{\hat w}_{11}&\cdots &x_{pS1}{\hat w}_{S1}&\cdots&x_{p1T}{\hat w}_{1T}&\cdots &x_{pST}{\hat w}_{ST} \\
                                                           \end{array}
                                                         \right)\nonumber\\
\end{eqnarray*}
}

%\newpage

%**************% %  APPENDIX B  % %**************%
\section{Supplementary Material B}
\label{appendixB}

\begin{figure}[!h]
\vspace{-0.5cm}
\includegraphics[width=0.9\textwidth]{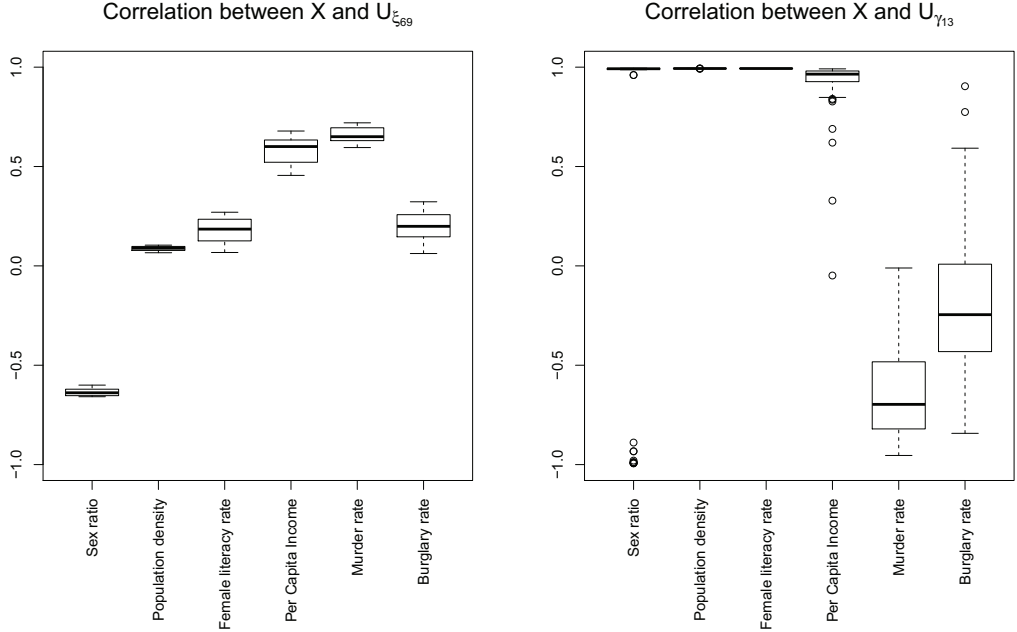}
\caption{Boxplots of correlations between the covariates and the spatial eigenvector ${\U}_{\xi_{69}}$ for each year (left) and correlations between the covariates and the temporal eigenvector ${\U}_{\gamma_{13}}$ for each area (right).}
\label{figB1}
\end{figure}

Figure \ref{figB1} displays boxplots of correlations between the covariates and the spatial eigenvector ${\U}_{\xi69}$ for each year (left picture), and boxplots of correlations between the covariates and the temporal eigenvector ${\U}_{\gamma13}$ for each area (right picture). Sex ratio, per capita income and murder rate exhibit the highest spatial correlations while the other covariates show moderate or low correlations. Regarding temporal correlations, the population-based variables (sex ratio, population density, and female literacy rate) and per capita income show the highest correlations. Population-based covariates exhibit temporal correlations close to 1 or $-1$ because they are only available at census years and have been linearly interpolated for the other years, and the temporal eigenvector ${\U}_{\gamma13}$ is nearly a straight line.

%%%%%%%%%%%%%%%%%%%% Hyper parameters

\begin{table}[!t]\footnotesize
\caption{Posterior estimates of the hyperparameters obtained using uniform priors on the positive real line for the standard deviations (INLA), and point estimates obtained with PQL}\label{tab1B}
\begin{center}
%\footnotesize
%\renewcommand\arraystretch{0.6}
%\resizebox{\textwidth}{!}{
\begin{tabular}{ll|rrrr|rrrr}
& & \multicolumn{4}{c|}{INLA (simplified Laplace)} & \multicolumn{4}{c}{PQL (tol=1e-5)}\\[0.5ex]
\hline
&Model & Mean & SD & $q_{0.025}$ & $q_{0.975}$ & Estimate & SE & $q_{0.025}$ & $q_{0.975}$ \\[0.5ex]
\hline$\sigma^2_s$
%& ST1  & $-$    &$-$    &    $-$    &    $-$    & $-$    &$-$    &$-$    &$-$    \\
& ST2  & 0.2072 & 0.0420&    0.1391 &    0.3021 & 0.1961 & 0.0370& 0.1236& 0.2685\\
& ST3  & 0.2072 & 0.0420&    0.1391 &    0.3021 & 0.1961 & 0.0370& 0.1237& 0.2686\\
& ST4  & 0.2349 & 0.0466&    0.1595 &    0.3403 & 0.2218 & 0.0422& 0.1391& 0.3044\\
\hline
$\sigma^2_t$
%& ST1  & $-$    &$-$    &    $-$    &    $-$    & $-$    &$-$    &$-$    &$-$    \\
& ST2  & 0.0174 & 0.0086&     0.0065&     0.0390& 0.0131 & 0.0056& 0.0022& 0.0241\\
& ST3  & 0.0174 & 0.0086&     0.0065&     0.0390& 0.0131 & 0.0056& 0.0022& 0.0241\\
& ST4  & 0.0164 & 0.0135&     0.0036&     0.0519& 0.0093 & 0.0057& 0.0000& 0.0204\\
\hline
$\sigma^2_{st}$
%& ST1  & $-$    &$-$    &    $-$    &    $-$    & $-$    &$-$    &$-$    &$-$    \\
& ST2  & 0.0212 & 0.0038&     0.0147&     0.0294& 0.0207 & 0.0035& 0.0139& 0.0275\\
& ST3  & 0.0212 & 0.0038&     0.0147&     0.0294& 0.0207 & 0.0035& 0.0139& 0.0275\\
& ST4  & 0.0222 & 0.0039&     0.0155&     0.0306& 0.0213 & 0.0036& 0.0142& 0.0283\\
\hline
\end{tabular}%}
\end{center}
\end{table}

Table \ref{tab1B} displays posterior estimates of the standard deviations of the random effects obtained with INLA, and point estimates of the standard deviations obtained with PQL. The estimates are rather similar in all models and in general, INLA estimates do not differ much from those obtained with PQL. Here uniform priors on the real line have been used for the standard deviations, but similar results were obtained with logGamma(1,0.00005) priors on the log-precisions.

%%%%%%%%%%%%%%%%%%%%

\begin{figure}[!t]
\includegraphics[width=\textwidth]{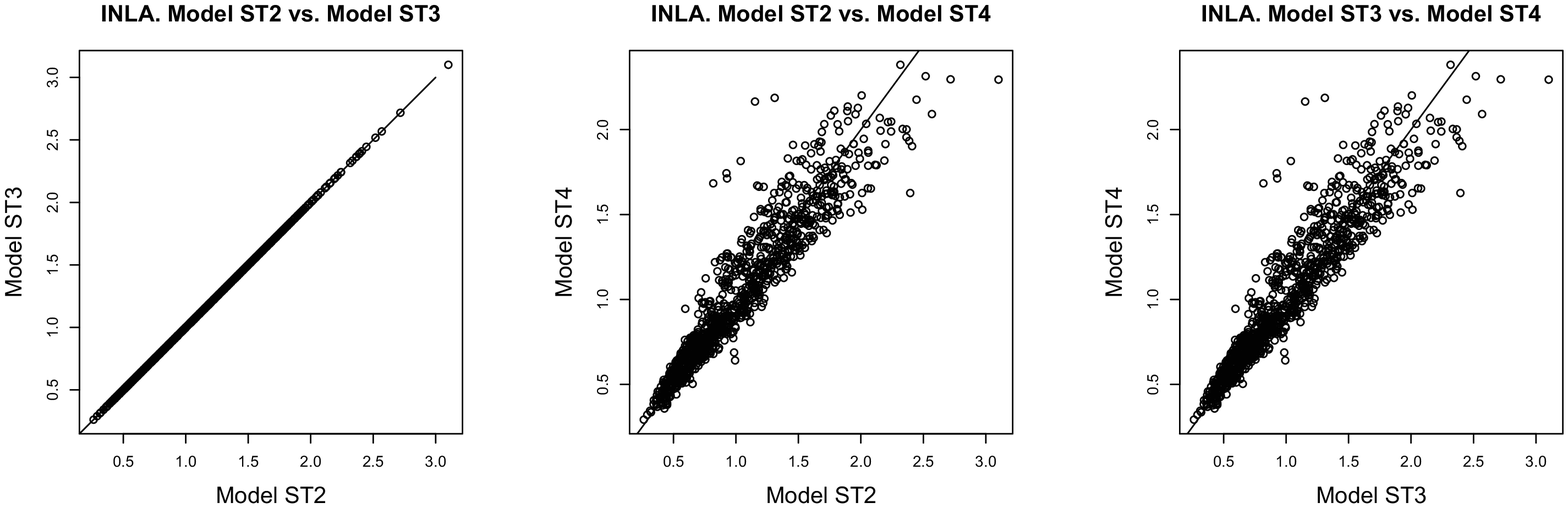}\vfill
\includegraphics[width=\textwidth]{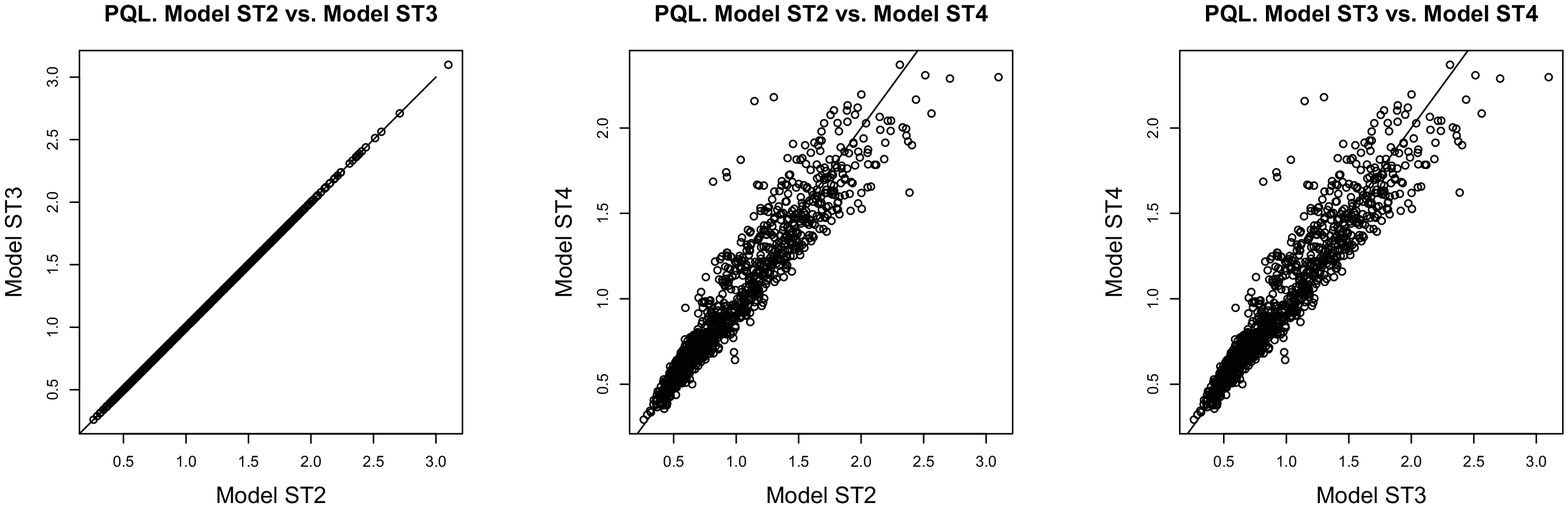}
\caption{Scatter plots of relative risk estimates obtained from Models ST2, ST3, and ST4. Top row: posterior means estimated with INLA;  bottom row:  point estimates estimated with PQL.}
\label{figB2}
\end{figure}

Figure \ref{figB2}  shows scatter plots of the estimated relative risks from Models ST2, ST3, and ST4 fitted with INLA (posterior means, top row) and PQL (point estimates, bottom row).
For both fitting techniques, Model ST2 (spatio-temporal model with no correction for confounding) shows the same fit as Model ST3 (accounting for confounding using restricted regression). However, comparing Models ST2 and ST3 with Model ST4 (accounting for confounding using constraints) shows notable differences:  the two methods that deal with confounding give different fits.

%\newpage

\begin{figure}[!t]
\includegraphics[width=\textwidth]{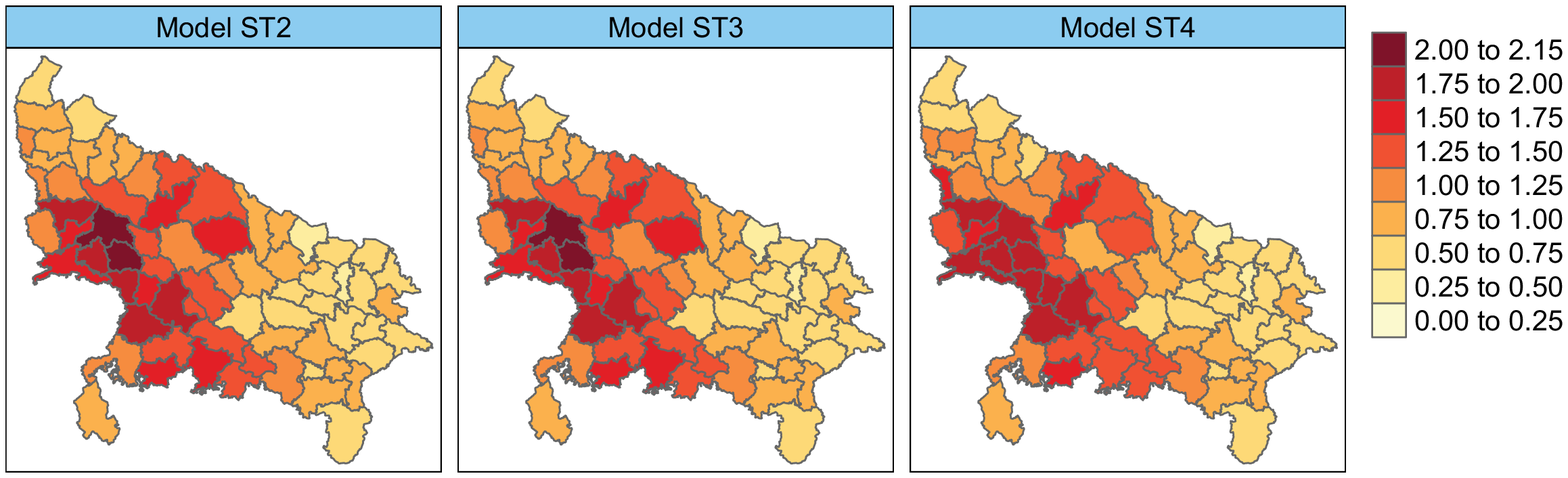}\\
\includegraphics[width=\textwidth]{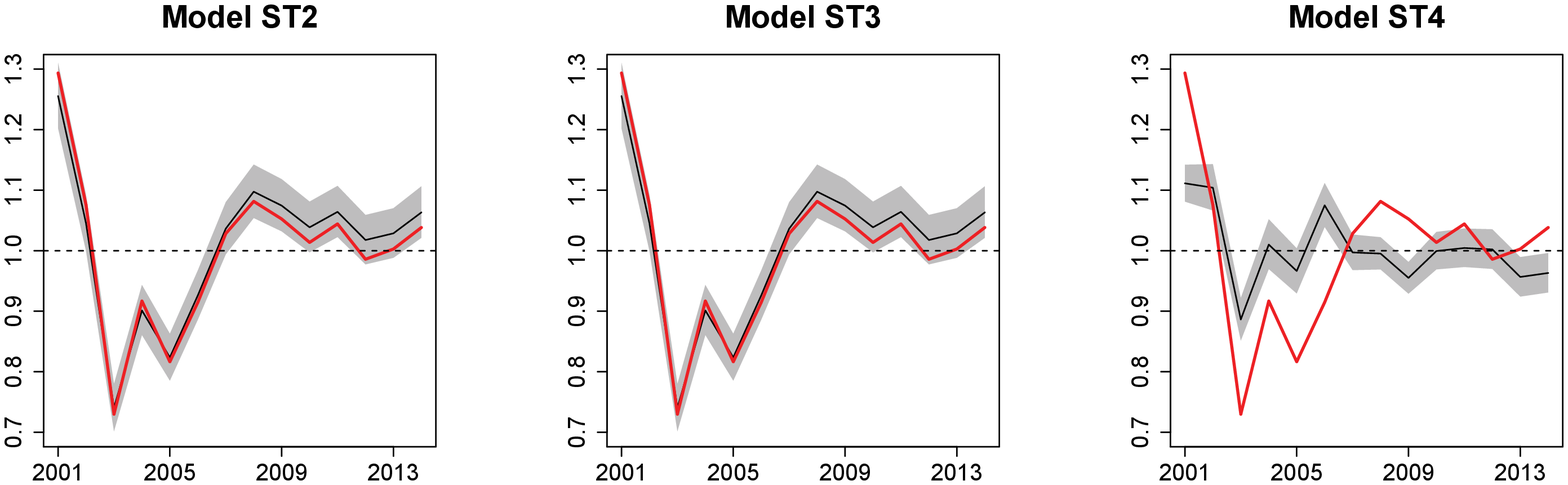}\\
\includegraphics[width=\textwidth]{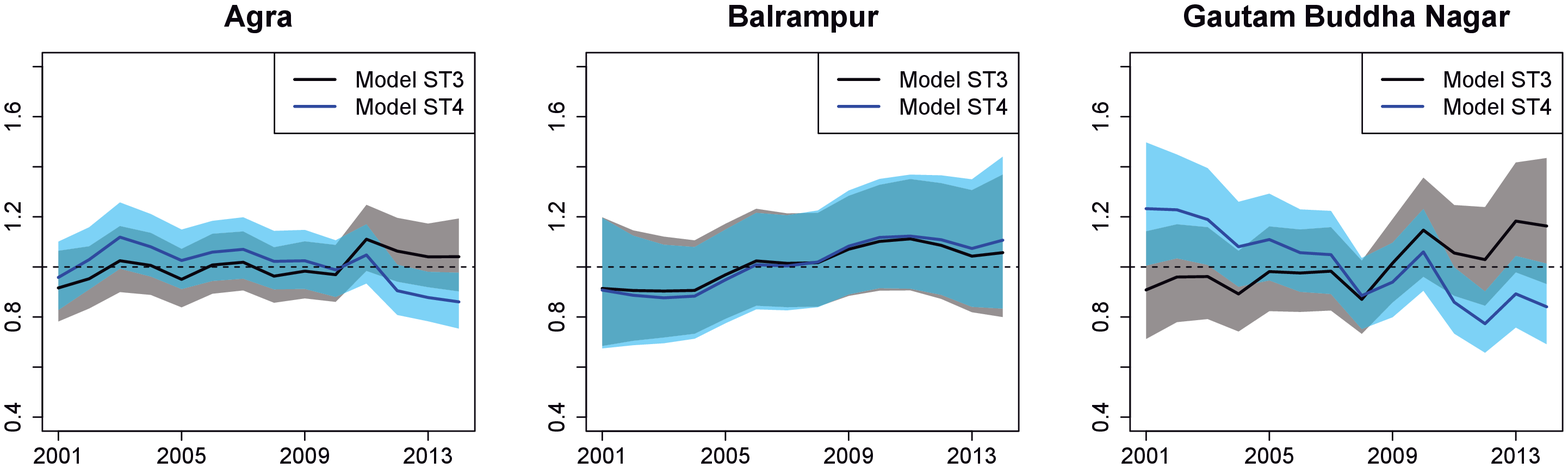}
\caption{Maps of posterior spatial patterns (top row) and posterior temporal patterns (middle row) obtained with models ST2, ST3, and ST4. Red lines (middle row) are the global standardized mortality ratios. Posterior spatio-temporal patterns (bottom row) obtained with Models ST3 and ST4 are shown for three districts (Agra, Balrampur and Gautam Buddha Nagar). Results are from the INLA fit.}
\label{figB3}
\end{figure}

Figure \ref{figB3} shows the posterior spatial patterns (top row), the posterior temporal patterns (middle row) obtained from Models ST2, ST3, and ST4 fitted with INLA (see \citealp{adin2017smoothing}), and posterior spatio-temporal patterns for three districts, Agra, Balrampur, and Gautam Buddha Nagar (bottom row).  While the posterior spatial patterns are quite similar for all models (top row), the posterior temporal and spatio-temporal patterns differ. The temporal patterns obtained with Models ST2 and ST3 are identical, while the temporal pattern obtained with Model ST4 is clearly different and does not track the global standardized mortality ratios (red line). Regarding posterior spatio-temporal patterns (space-time interactions), some areas present mild differences between Models ST3 and ST4 (e.g., Agra) and others exhibit negligible differences (Balrampur), but some districts show striking differences (Gautam Buddha Nagar).  In general, most districts have modest differences in the spatio-temporal component (not shown). %INLA relative risk estimates (posterior means) obtained with models ST3 and ST4 in the %same three districts shown in Figure \ref{figB3} can be found in in Figure \ref{figB4}.

%\newpage

\begin{figure}[!t]
\includegraphics[width=\textwidth]{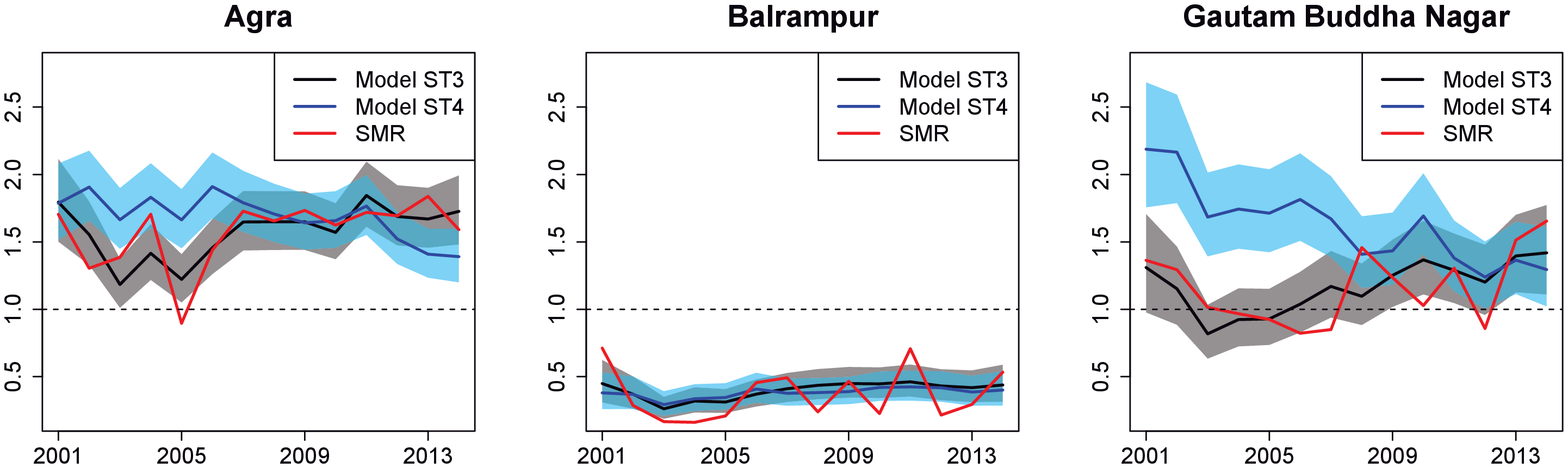}
\caption{Final risk estimates obtained with models ST3 and ST4 and INLA in three districts, Agra, Balrampur, and Gautam Buddha Nagar. Black lines and grey credible intervals correspond to Model ST3, blue lines and credible intervals to Model ST4. Red lines represent the crude standardized mortality ratios.}
\label{figB4}
\end{figure}

Figure \ref{figB4} displays the INLA relative risk estimates (posterior means) obtained with models ST3 and ST4 in the same three districts shown in Figure \ref{figB3}, Agra, Balrampur, and Gautam Buddha Nagar. Black lines and grey credible intervals are from Model ST3, while blue lines and blue credible intervals are from Model ST4. Standardized mortality ratios are shown in red. The differences in risks between Models ST3 and ST4 in Agra and Gautam Buddha Nagar are due to both the temporal and spatio-temporal components, while the differences in Balrampur are due to the temporal component. Given that the temporal pattern is common to all districts, it seems striking that the differences in risk in Balrampur are very small in comparison to Agra and Gautam Buddha Nagar. The reason is that the risk estimate is the product of the spatial, temporal, and spatio-temporal components. In Balrampur, the spatial component is small (between 0.25 and 0.50) whereas in Agra and Gautam Buddha Nagar the spatial relative risk is greater than one. Consequently, differences in risk are softened in Balarampur and accentuated in Agra and Gautam Buddha Nagar.

\end{document}